\theoremstyle{plain}
\newtheorem{theorem}{Theorem}
\newtheorem{lemma}{Lemma}
\newtheorem{corollary}{Corollary}
\newtheorem{proposition}{Proposition}
\theoremstyle{definition}
\newtheorem{definition}{Definition}
\newtheorem{example}{Example}
\title{Brownian Motion with a Pulse:\\A Biostatistician's Guide to Diffusions, Bridges, Functional PCA, and First-Passage Models}
\author{
  Elvis Han Cui\\[0.25em]
  {\normalfont\small Department of Biostatistics, University of California, Los Angeles, Los Angeles, CA, USA}\\
  {\normalfont\small Kuntuo, an IQVIA company, China}\\
  {\normalfont\small \texttt{elviscuihan@g.ucla.edu}}
}
\date{}
\begin{document}
\maketitle

\begin{abstract}
Brownian motion is a compact mathematical language for continuous-time uncertainty in biostatistics. This tutorial develops the process from construction and path properties to tools that recur in applied biomedical work: the Markov and strong Markov properties, the Karhunen-Loève expansion, functional principal component analysis (Functional PCA), reflection principles, local time, stochastic differential equations (SDEs), Brownian bridges, and empirical-process limits. The applications emphasize longitudinal biomarkers, degradation modelling, first-passage endpoints, dynamic frailty, group-sequential monitoring, calibration diagnostics, recurrent-event processes, electronic health records, and wearable streams. A short cross-domain section uses literary and historical archives to make Brownian-bridge thinking concrete without shifting the paper away from biostatistics, and includes a reproducible chapter-level experiment on \emph{Frankenstein}. The Black-Merton-Scholes model is included as a solved SDE template, not as a finance application in its own right. The aim is to connect rigorous probability with modelling decisions faced by biostatisticians when biological processes evolve between noisy observation times.
\end{abstract}

\keywords{Brownian Motion \and Biostatistics \and Functional PCA \and Karhunen-Loève Expansion \and Brownian Bridge \and Degradation Modelling \and Stochastic Differential Equations \and First Passage \and Survival Analysis \and Sequential Monitoring \and Empirical Processes.}

\section{Introduction}

Brownian motion is a canonical model for continuous-time random fluctuation. Its paths are continuous yet nowhere differentiable, its increments are independent and Gaussian, and its covariance kernel generates both diffusion models and empirical-process limits. These properties make it a useful organizing object for biostatistical problems where the scientific process evolves continuously but is observed intermittently, noisily, and sometimes only after clinical decisions have already been made.

This geometry is familiar across biomedical data analysis. A latent biomarker changes between clinic visits; organ damage accumulates until a threshold is crossed; a trial statistic moves in information time as interim looks accrue; a survival model leaves martingale residuals whose cumulative pattern should look like noise if the fitted hazard is adequate; a wearable device records densely but imperfectly while the physiological state underneath remains hidden. In each setting, Brownian motion is not a claim that biology is literally Gaussian. It is a disciplined baseline for the stochastic part of the model.

This manuscript is written as a tutorial and methods note for biostatisticians. We begin with the formal definition of Brownian motion and the construction of continuous sample paths, then move through the Markov and strong Markov properties, the Karhunen-Loève expansion, Functional PCA, the reflection principle, zero sets, local time, Lévy's characterization, stochastic differential equations, the Black-Merton-Scholes template, and Donsker's theorem \citep{donsker1952justification,billingsley2013convergence}. The proofs stay close to classical probability theory, while the examples translate each tool into longitudinal, survival, clinical-trial, risk-prediction, EHR, wearable, and degradation-modelling language.

The paper makes three contributions. First, it gives a compact route through the Brownian results most useful for applied modelling: construction, stopping, reflection, local time, martingale characterization, SDEs, first-passage distributions, and empirical-process convergence. Second, it links the Karhunen-Loève expansion to Functional PCA for sparse or dense biomarker curves, so that a theorem about covariance operators becomes a practical dimension-reduction tool. Third, Theorem~\ref{thm:biostats-bridge} formalizes a Brownian-bridge diagnostic for ordered biomedical residuals, with a null limit, an ordered-drift alternative, and a consistent drift locator. A brief set of literary and historical examples, including one reproducible chapter-level experiment, is included as a teaching device for the same ordering and bridge ideas.

Figure~\ref{fig:biostats-gallery} gives the practical grammar. In the settings considered here, Brownian motion is a working model for uncertainty in motion rather than an ornamental source of random variation.

\begin{figure}[t]
    \centering
    \includegraphics[width=\textwidth]{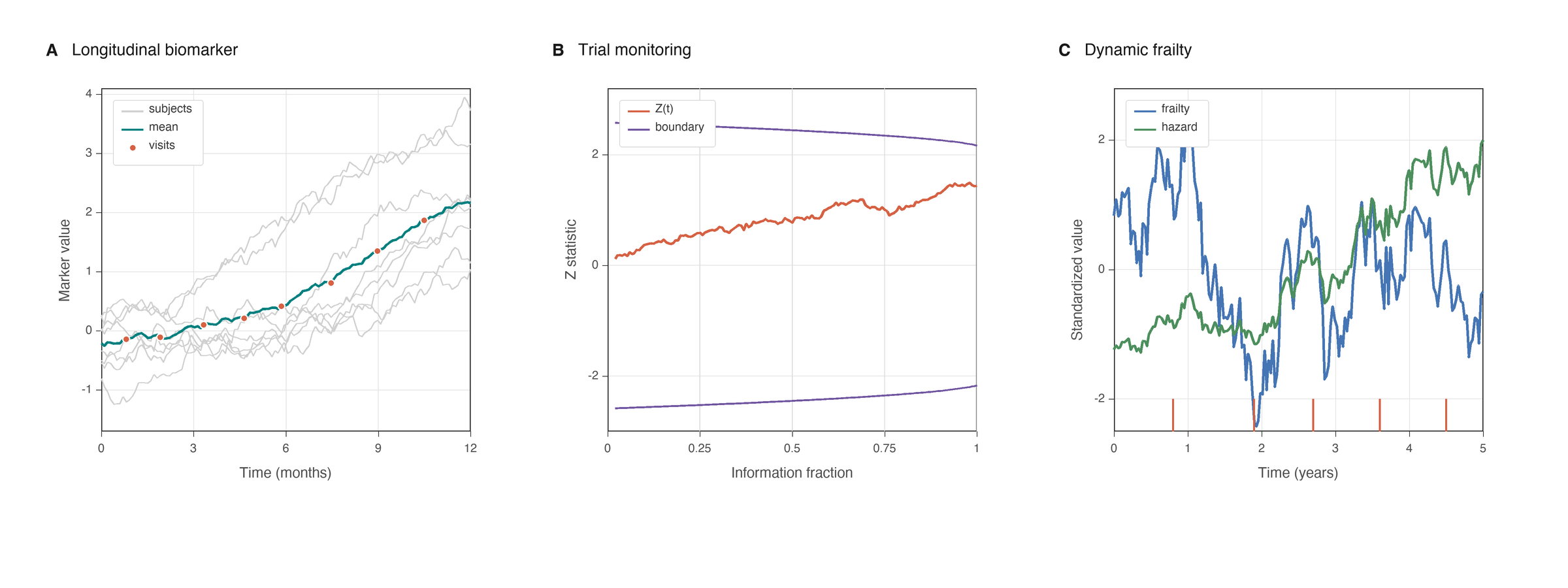}
    \caption{Biostatistical paths.}
    \label{fig:biostats-gallery}
\end{figure}

\section{Definitions}

The first stop is deliberately formal. Before Brownian motion can be used
for biomarkers, residual processes, trial monitoring, or degradation
endpoints, its randomness must be specified independently of any one
application. The definitions below build that path from increments,
covariance, sample space, and filtration. Later, when a longitudinal
trajectory or an ordered residual curve is compared with a Brownian limit,
this formal object becomes the reference shape.

\subsection{Brownian Motion}
\begin{definition}[Standard Brownian Motion \citep{liggett2010continuous}]
Standard Brownian Motion $B(t)$ is a stochastic process with continuous paths that satisfies one of the following properties:
\begin{itemize}
    \item[(a)] \( B(t) \) has stationary independent increments, and \( B(t) \) is \( \mathcal{N}(0,t) \) for \( t \geq 0 \).
    \item[(b)] \( B(t) \) is a Gaussian process with \( \mathbb{E}B(t) = 0 \) and 
    \[
    \text{Cov}(B(s), B(t)) = s \wedge t.
    \]
\end{itemize}
\end{definition}

\textbf{Remarks}: 
\begin{itemize}
    \item Sometimes we write $B_t$ for $B(t)$ and $B_t(\omega)$ or $B(t,\omega)$ for a given sample point $\omega\in\Omega$.
    \item There exists a probability space $(\Omega,\mathcal{H},P)$ on which standard Brownian motion $B$ exists. The proof is postponed to Theorem~\ref{thm:exist}.
\end{itemize}

It is convenient to take $\Omega=C{[0,\infty)}$, the space of all continuous functions $\omega(\cdot)$ on $\mathbb{R}^+$ or $[0,\infty)$. Note that if we simply take $\Omega=\mathbb{R}^{[0,\infty)}$ and its Borel $\sigma$-algebra, then the set
$$C=\{\omega:B(t,\omega)\text{ is continuous in }t.\}$$
is not even an event since it depends on uncountably many points. For $\Omega=C{[0,\infty)}$, the $\sigma$-algebra is taken to be the Borel $\sigma$-algebra, i.e., the smallest one for which the projection mapping $\omega\rightarrow\omega(t)$ is measurable for each $t$ and we denote it as $\mathcal{H}$. We have a family $\mathbb{P}^\bullet=\{P^x\}$ of probability measures indexed by $x\in\mathbb{R}^1$ instead of a single probability measure. The measure $P^x$ is the distribution of $x+B(\cdot)$ where $B$ is a standard Brownian motion. The corresponding expectation is denoted by $\mathbb{E}^x$. The probability triple is denoted by $(\Omega,\mathcal{H},\mathbb{P}^\bullet)$. If in addition, we have a filtration $\mathbb{F}=\{\mathcal{F}_t:t\ge 0\}$, then the stochastic basis is the quadruple $(\Omega,\mathcal{H},\mathbb{F},\mathbb{P}^\bullet)$ \citep{dabrowska2020stochastic}.

\begin{lemma}
Let $\Omega=\mathbb{R}^{[0,\infty)}$ and $\mathcal{H}$ be its Borel $\sigma$-algebra. Then the set
$$C=\{\omega:B(t,\omega)\text{ is continuous in }t.\}$$
is not even an event, i.e., $C\not\in\mathcal{H}$.
\end{lemma}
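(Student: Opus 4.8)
The plan is to exploit the defining feature of the product $\sigma$-algebra generated by the coordinate projections: although it is generated by uncountably many projections $\omega \mapsto \omega(t)$, every one of its members is already determined by countably many coordinates. Continuity, by contrast, is a genuinely uncountable constraint on a path, and these two facts turn out to be irreconcilable.

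First I would isolate the structural lemma. Let $\mathcal{A}$ be the collection of all sets $A \subseteq \mathbb{R}^{[0,\infty)}$ that \emph{depend on only countably many coordinates}, meaning there is a countable set $S = S_A \subseteq [0,\infty)$ such that whenever $\omega \in A$ and $\omega'(t) = \omega(t)$ for all $t \in S$, one also has $\omega' \in A$. I would verify that $\mathcal{A}$ is a $\sigma$-algebra: it is closed under complementation, since $A$ and $A^c$ are determined by the same index set; and it is closed under countable unions, since for $A = \bigcup_n A_n$ the set $S = \bigcup_n S_{A_n}$ is a countable union of countable sets, hence countable, and serves as a common witness. Moreover every generator of $\mathcal{H}$ lies in $\mathcal{A}$, because the preimage $\{\omega : \omega(t) \in B\}$ of a Borel set $B$ under the projection at time $t$ depends on the single coordinate $t$. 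Therefore $\mathcal{H} \subseteq \mathcal{A}$: every event depends on at most countably many coordinates.

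Then I would argue by contradiction. Suppose $C \in \mathcal{H}$; by the lemma there is a countable $S \subseteq [0,\infty)$ witnessing its coordinate dependence. Pick any $\omega \in C$, say the identically zero path, which is certainly continuous, and choose a time $t^* \in [0,\infty) \setminus S$, which is possible since $S$ is countable while $[0,\infty)$ is uncountable. Define $\omega'$ to agree with $\omega$ everywhere except at $t^*$, where I reset $\omega'(t^*)$ to some value making the path discontinuous at $t^*$. Since $\omega'$ agrees with $\omega$ on all of $S$ and $\omega \in C$, the defining property of $S$ forces $\omega' \in C$; yet $\omega'$ is discontinuous, contradicting the definition of $C$. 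Hence $C \notin \mathcal{H}$.

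The main obstacle is the structural lemma itself, and within it the closure under countable unions while correctly tracking the common witness set $S = \bigcup_n S_{A_n}$. Once this \textbf{countable support} principle is secured, the contradiction is immediate: continuity constrains a path on a dense uncountable set of times and simply cannot be read off from any countable collection of coordinates.
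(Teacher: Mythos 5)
Your proof is correct and takes essentially the same route as the paper's: both establish the \emph{countable support} principle --- that every member of the product $\sigma$-algebra is determined by countably many coordinates (the paper's property $(*)$, your class $\mathcal{A}$) --- and then derive the contradiction by perturbing a continuous path at a single time outside the countable witness set. If anything, your write-up is more complete, since you explicitly verify closure of $\mathcal{A}$ under complements and countable unions, a step the paper leaves to the reader.
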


\begin{proof} 
Define property $(*)$: $\exists J\text{ countable s.t. if } f\in A,g\in \mathbb{R}^{[0,\infty)}$, then $f|_J=g|_J$ implies $g\in A.$ Define the family of sets
$$\mathcal{A}=\left\{A\subseteq\mathbb{R}^{[0,\infty)}: A\text{ satisfies }(*).\right\}$$
Clearly $\mathcal{B}_\alpha=\pi_\alpha^{-1}(\mathcal{B}(\mathbb{R}^\alpha))$, is a subset of $\mathcal{A}$ ($\alpha\text{ finite}$ and $\pi_\alpha$ is the projection from $\mathbb{R}^{[0,\infty)}$ to $\mathbb{R}^\alpha$). But $\mathcal{H}$ is generated from $\mathcal{B}_\alpha$:
$$\mathcal{H}=\sigma\left(\bigcup_{\alpha\text{ finite}}\pi_\alpha^{-1}(\mathcal{B}(\mathbb{R}^\alpha))\right).$$
So if we can show that $\mathcal{A}$ is a $\sigma$-algebra and $C\not\in\mathcal{A}$, then we are done. $C$ does not satisfy $(*)$: if $J$ exists, we pick $x\in J^c$ and set $g(x)=f(x)+1$; then $f|_J=g|_J$ but $g$ is not continuous.
\end{proof}

\section{Basic and Advanced Properties}

Once the process exists, the next question is how it behaves. Brownian
motion remembers its present but not its past, crosses and recrosses
levels, spends measurable time near points, and moves continuously while
refusing to have an ordinary derivative. These are not merely technical
curiosities. They are the path-level facts that make Brownian motion useful
for biomedical modelling: threshold times, near-cutoff burden, cumulative
residual drift, and the uncertainty that grows between sparse observations
all inherit their meaning from these properties.

\subsection{Existence and Construction}
\begin{theorem}[Existence and Construction of Brownian motion]\label{thm:exist}
There exists a probability space $(\Omega,\mathcal{H},\mathbb{P})$ on which standard Brownian motion $B$ exists and has continuous sample path.
\end{theorem}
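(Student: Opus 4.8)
The plan is to give an explicit construction on $[0,1]$ via the Lévy–Ciesielski series and then to patch independent copies together to cover $[0,\infty)$; the explicit series has the decisive advantage of producing continuous paths directly, thereby sidestepping the non-measurability phenomenon flagged in the preceding lemma. First I would introduce the Haar system $\{H_n\}$, an orthonormal basis of $L^2[0,1]$ indexed by $n = 2^k + j$ with $0 \le j < 2^k$, together with the associated Schauder functions $S_n(t) = \int_0^t H_n(u)\,du$, which are continuous tent functions whose supports at each fixed level $k$ are disjoint dyadic subintervals and whose peak height is $2^{-k/2 - 1}$. Taking i.i.d.\ standard Gaussians $\xi_0,\xi_1,\dots$ on some probability space, I would define the candidate
\[
B(t) = \sum_{n=0}^{\infty} \xi_n S_n(t), \qquad t \in [0,1].
\]

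The heart of the argument, and the step I expect to be the main obstacle, is showing that this random series converges uniformly on $[0,1]$ almost surely, since uniform convergence of continuous partial sums is exactly what delivers a continuous limit. I would group the sum by dyadic levels, $D_k(t) = \sum_{j=0}^{2^k-1} \xi_{2^k+j} S_{2^k+j}(t)$, and, using the disjoint-support property, bound $\sup_t |D_k(t)| \le 2^{-k/2-1}\max_{0 \le j < 2^k} |\xi_{2^k+j}|$. A standard Gaussian tail estimate together with Borel–Cantelli then shows that $\max_j |\xi_{2^k+j}| \le c\sqrt{k}$ for all large $k$ almost surely, so $\sup_t|D_k(t)| \le c\sqrt{k}\,2^{-k/2-1}$ is summable in $k$ and the Weierstrass $M$-test yields a.s.\ uniform convergence to a continuous limit $B$.

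With continuity secured, I would verify the distributional requirements of part (b) of the definition. Each $B(t)$ is an $L^2$-limit of Gaussian partial sums, hence Gaussian with mean zero; more generally any finite linear combination is a limit of Gaussians and therefore Gaussian, so $(B(t_1),\dots,B(t_n))$ is jointly Gaussian. The covariance follows from the independence of the $\xi_n$ and from Parseval's identity applied to the Haar expansions of the indicator functions, since $S_n(t) = \langle \mathbf{1}_{[0,t]}, H_n\rangle$:
\[
\mathrm{Cov}(B(s),B(t)) = \sum_{n} S_n(s)S_n(t) = \sum_n \langle \mathbf{1}_{[0,s]}, H_n\rangle\langle \mathbf{1}_{[0,t]}, H_n\rangle = \langle \mathbf{1}_{[0,s]}, \mathbf{1}_{[0,t]}\rangle = s \wedge t.
\]

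Finally I would extend from $[0,1]$ to $[0,\infty)$ by taking a sequence of independent copies $\{B^{(m)}\}_{m\ge 0}$ of the above process and concatenating them, setting $B(t) = \sum_{i=0}^{m-1} B^{(i)}(1) + B^{(m)}(t-m)$ for $t \in [m,m+1]$; independence across blocks and the matching of values at integer times preserve both continuity and the covariance $s\wedge t$, yielding Brownian motion on all of $[0,\infty)$. An alternative route would be to specify the finite-dimensional centered Gaussian laws with covariance $(t_i \wedge t_j)$ directly, invoke the Kolmogorov extension theorem, and then pass to a continuous modification via the Kolmogorov continuity criterion using $\mathbb{E}|B(t)-B(s)|^4 = 3|t-s|^2$; I prefer the explicit series precisely because it makes the continuous version manifest rather than merely asserting its existence.
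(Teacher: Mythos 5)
Your construction is correct, but it takes a genuinely different route from the paper's. The paper begins with a process satisfying (i)--(ii) whose existence is granted by the Kolmogorov extension theorem, forms the dyadic piecewise-linear interpolants $B^n$ agreeing with that process at the points $t = k/2^n$, and uses a Gaussian tail bound plus Borel--Cantelli on the events $A_n = \{\sup_{t\in[0,1]}|B^n(t)-B^{n-1}(t)| > n^{-2}\}$ to conclude that the $B^n$ form a Cauchy sequence in $C[0,1]$, whose limit is the desired continuous version. Your Lévy--Ciesielski series instead builds the process and its continuity simultaneously from a single i.i.d.\ Gaussian sequence, so you never invoke the extension theorem at all, and the covariance $s \wedge t$ falls out of Parseval's identity rather than being inherited from prescribed finite-dimensional laws. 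The analytic core is the same in both arguments --- dyadic-level fluctuations of order $\sqrt{k}\,2^{-k/2}$ killed by Gaussian tails and Borel--Cantelli --- and the two constructions are in fact close kin: the partial sums of your series through level $k$ are precisely the dyadic piecewise-linear interpolation of the limit path, so your proof can be read as a self-contained rendering of the paper's. What the paper's version buys is brevity once the extension theorem is granted; what yours buys is explicitness, independence from that machinery, and (as you note) immunity to the non-measurability issue for $\Omega = \mathbb{R}^{[0,\infty)}$ raised in the paper's preceding lemma, at the modest cost of setting up the Haar/Schauder apparatus. Your concatenation of independent copies to pass from $[0,1]$ to $[0,\infty)$ supplies the detail the paper leaves as ``the extension is similar,'' and your closing alternative (Kolmogorov extension plus the continuity criterion via $\mathbb{E}|B(t)-B(s)|^4 = 3|t-s|^2$) is essentially a third valid route, closer in spirit to the paper's starting point.
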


\textbf{Remarks}: Recall that a Brownian motion is a process with the following properties:
\begin{itemize}
    \item[(i)] $B(t)$ has stationary and independent increments;
    \item[(ii)] $B(t)$ is Gaussian with mean $0$ and variance $t$;
    \item[(iii)] $B(t)$ is continuous in $t$.
\end{itemize}
The existence of a process $B(t)$ with $(i)$ and $(ii)$ is guaranteed by Kolmogorov's extension theorem. Here we show that starting with such $B(t)$, we can construct another process with $(i),(ii)$ and continuous path for $t\in[0,1]$. The extension to $t\in[0,\infty)$ is similar.

\begin{proof}[Proof. \citep{dabrowska2020stochastic}]
For $n\ge 1$, define
$$B^n(t) = B(t)\text{ if }t=\frac{k}{2^n},\ k=0,1,\cdots,2^n$$
and connect points using straight lines, i.e., if $s\in[\frac{k-1}{2^n},\frac{k}{2^n}]$, then
$$B^n(s)=B\left(\frac{k-1}{2^n}\right)+{2^n}\left(B\left(\frac{k}{2^n}\right)-B\left(\frac{k-1}{2^{n}}\right)\right)\left(s-\frac{k-1}{2^n}\right).$$

Define
$$A_n=\left\{ \sup_{t\in[0,1]} \Big| B^n(t)-B^{n-1}(t) \Big| >\frac{1}{n^2}\right\}$$
so that
\begin{align*}
    P\left(A_n\right)&=P\left(\max_{k\le 2^n}\Big| B^n\left(\frac{k}{2^n}\right)-B^{n-1}\left(\frac{k-1}{2^n}\right) \Big| >\frac{1}{n^2}\right)\\
    &\le 2^nP\left(\Big|\mathcal{N}(0,2^{-n})\Big|>\frac{1}{n^2}\right)
    =2^nP\left(\Big|\mathcal{N}(0,1)\Big|>\frac{2^{n/2}}{n^2}\right)\\
    &\le\frac{3n^8}{4^n}\text{ by Markov's inequality.}
\end{align*}

By the ratio test, we have
$$\frac{P(A_{n+1})}{P(A_{n})}\rightarrow\frac{1}{4}<1$$
so 
$$\sum_{n=1}^\infty P(A_n)<\infty.$$
By Borel-Cantelli, the event $\limsup _nA_n$ has probability $0$ and 
$$P\left( \sup_{t\in[0,1]}\Big| B^n(t)-B^{n-1}(t) \Big| \le\frac{1}{n^2}\right)=1$$
for all but finitely many $n$. But $B^n(t)$ is uniformly continuous on $t\in[0,1]$ and forms a Cauchy sequence in $C[0,1]$ with the $\sup$ norm. Hence, their limit is continuous.
\end{proof}

\subsection{The Markov and Strong Markov Property}
Define two $\sigma$-algebras:
\begin{align}
    \mathcal{F}_s^0&=\sigma(B_r:r\le s)\\
    \mathcal{F}_s^+&=\bigcap_{t>s}\mathcal{F}_t^0
\end{align}
Let $f(u)>0$ for all $u>0$, then the random variable
$$\lim\sup_{t\downarrow s}\frac{B_t-B_s}{f(t-s)}\in\mathcal{F}_s^+$$
but is not $\mathcal{F}_s^0$-measurable since it depends on the value of $B_t$. The $\sigma$-algebra $\mathcal{F}_0^+$ can be considered as  ``an infinitesimal peak" at the future \citep{durrett2019probability}.

\begin{theorem}[Markov Property]
If $Y$ is bounded and measurable, then for every $x\in\mathbb{R}^1$ and $s\ge 0$, we have
\begin{align}
    \mathbb{E}^x(Y\circ\theta_s|\mathcal{F}_s^+)&=\mathbb{E}^{B(s)}Y\text{ a.s. }P^x
\end{align}
\end{theorem}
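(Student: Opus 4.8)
The plan is to prove the identity first for a multiplicatively rich class of test functionals and then extend to arbitrary bounded measurable $Y$ by a monotone-class argument. Concretely, I would begin with $Y$ of the product form $Y=\prod_{m=1}^{n}f_m(B_{t_m})$, where $0\le t_1<\cdots<t_n$ and each $f_m$ is bounded and continuous, so that $Y\circ\theta_s=\prod_{m=1}^{n}f_m(B_{s+t_m})$. Such products form a multiplicative class whose generated $\sigma$-algebra is all of $\mathcal{H}$, so once the theorem is known for them, the functional monotone class theorem (a vector space of bounded functions, closed under bounded monotone limits and containing this multiplicative class, contains every bounded $\mathcal{H}$-measurable function) upgrades it to general bounded measurable $Y$.

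For product-form $Y$ I would first prove the ordinary Markov statement with $\mathcal{F}_s^0$ replacing $\mathcal{F}_s^+$. Writing $B_{s+t_m}=B_s+(B_{s+t_m}-B_s)$ and using stationary independent increments, the increment vector $(B_{s+t_m}-B_s)_m$ is independent of $\mathcal{F}_s^0$ and has the same law as $(B_{t_m})_m$ under $P^0$; since $B_s$ is $\mathcal{F}_s^0$-measurable, conditioning factorizes and yields $\mathbb{E}^x(Y\circ\theta_s\mid\mathcal{F}_s^0)=h(B_s)$, where $h(y):=\mathbb{E}^y Y=\mathbb{E}^0[\prod_{m}f_m(y+B_{t_m})]$. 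Boundedness and continuity of each $f_m$, together with dominated convergence, show that $h$ is bounded and continuous in $y$.

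The main obstacle, and precisely the reason the theorem is phrased with the larger $\sigma$-algebra $\mathcal{F}_s^+=\bigcap_{u>s}\mathcal{F}_u^0$, is upgrading the conditioning from $\mathcal{F}_s^0$ to $\mathcal{F}_s^+$; I would do this with a limiting argument driven by path continuity. Fix $A\in\mathcal{F}_s^+$, so that $A\in\mathcal{F}_{s'}^0$ for every $s'>s$. Applying the already-proved Markov property at the later time $s'$ gives $\mathbb{E}^x(Y\circ\theta_{s'};A)=\mathbb{E}^x(h(B_{s'});A)$. Letting $s'\downarrow s$, continuity of the Brownian paths (from Theorem~\ref{thm:exist}) and of each $f_m$ forces $Y\circ\theta_{s'}\to Y\circ\theta_s$ boundedly, while continuity of $h$ forces $h(B_{s'})\to h(B_s)$ boundedly; dominated convergence on both sides then yields $\mathbb{E}^x(Y\circ\theta_s;A)=\mathbb{E}^x(h(B_s);A)$ for every $A\in\mathcal{F}_s^+$. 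Because $h(B_s)$ is $\mathcal{F}_s^0\subseteq\mathcal{F}_s^+$-measurable, this is exactly $\mathbb{E}^x(Y\circ\theta_s\mid\mathcal{F}_s^+)=h(B_s)=\mathbb{E}^{B(s)}Y$.

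I would close by invoking the monotone class theorem as in the first paragraph to pass from product-form $Y$ to arbitrary bounded measurable $Y$. The delicate point I expect to require care is the double interchange of limits in the upgrade step: one must verify the convergence is genuinely dominated (guaranteed by the uniform boundedness of the $f_m$) and that $h$ is actually continuous—which hinges on the $f_m$ being continuous rather than merely measurable, and is exactly why the reduction to continuous test functions is carried out before the limiting argument.
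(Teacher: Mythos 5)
Your proposal is correct and complete, but it goes well beyond what the paper actually writes down: the paper's proof consists solely of the reduction step --- observing that $y\mapsto\mathbb{E}^yY$ is measurable, so that $\mathbb{E}^{B(s)}Y$ is a legitimate candidate for the conditional expectation, and noting that it then suffices to verify $\mathbb{E}^x(Y\circ\theta_s\,\mathbb{I}_A)=\mathbb{E}^x\big((\mathbb{E}^{B(s)}Y)\,\mathbb{I}_A\big)$ for all $A\in\mathcal{F}_s^+$ --- and it stops there without ever carrying out that verification. You supply exactly the missing content, along the classical route (essentially Durrett's, which the paper cites): product functionals $\prod_m f_m(B_{t_m})$ with bounded continuous $f_m$; the independent-increments computation that proves the statement with $\mathcal{F}_{s'}^0$ in place of $\mathcal{F}_s^+$; the crucial $s'\downarrow s$ bounded-convergence step using path continuity and continuity of $h(y)=\mathbb{E}^yY$, which is the only place the strictly larger $\sigma$-algebra $\mathcal{F}_s^+=\bigcap_{t>s}\mathcal{F}_t^0$ is genuinely earned, since $A\in\mathcal{F}_s^+$ lies in $\mathcal{F}_{s'}^0$ for every $s'>s$; and the monotone class extension to arbitrary bounded measurable $Y$. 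Your choice of test class is also consistent with how the paper itself later uses the theorem: its proof of Blumenthal's 0-1 law reduces to precisely such products via the factorization $Y=Z(X\circ\theta_s)$. The one point worth making explicit in a final write-up is that when you invoke the $\mathcal{F}_{s'}^0$-statement at time $s'>s$, the conditional expectation is $h(B_{s'})$ with the \emph{same} function $h$, because $(B_{s'+t_m}-B_{s'})_m$ has the same law as $(B_{t_m})_m$ under $P^0$ regardless of $s'$; this is what makes both sides of the identity converge to the desired limits as $s'\downarrow s$.
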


\textbf{Remarks}: The right-hand side is the composition of the function $y\rightarrow\mathbb{E}^yY$ with $B(s)$. Note that the function $y\rightarrow\mathbb{E}^yY$ is measurable for $Y$ bounded.

\begin{proof}
Since $y\rightarrow\mathbb{E}^yY$ is measurable, its composition with $B(s)$ is also measurable. Hence, it suffices to show that the right-hand side satisfies 
$$\mathbb{E}^x(Y\circ\theta_s\mathbb{I}_A)=\mathbb{E}^x((\mathbb{E}^{X(s)}Y)\mathbb{I}_A)$$
for $Y$ bounded and $A\in\mathcal{F}_s^+$.
\end{proof}

\begin{theorem}[Strong Markov Property]
If $Y_s(\omega)$ is bounded and jointly measurable on $[0,\infty)\times\Omega$, and $\tau$ is a stopping time, then for every $x\in\mathbb{R}^1$ and $s\ge 0$, we have
\begin{align}
    \mathbb{E}^x(Y_\tau\circ\theta_\tau|\mathcal{F}_\tau)&=\mathbb{E}^{B(\tau)}Y_\tau\text{ a.s. }P^x\text{ on }\{\tau<\infty\}.
\end{align}
\end{theorem}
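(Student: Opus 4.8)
The plan is to deduce the strong Markov property from the ordinary Markov property (the preceding theorem) by a dyadic discretization of $\tau$ followed by a passage to the limit. Exactly as in the proof of the Markov property, it suffices to establish the integrated identity
$$\mathbb{E}^x\left(Y_\tau\circ\theta_\tau\,\mathbb{I}_A\right)=\mathbb{E}^x\left(\left(\mathbb{E}^{B(\tau)}Y_\tau\right)\mathbb{I}_A\right)$$
for every $A\in\mathcal{F}_\tau$ with $A\subseteq\{\tau<\infty\}$; the measurability of $y\mapsto\mathbb{E}^yY_s$ and of its composition with $B(\tau)$ is handled as in the previous theorem, so I focus on verifying this identity.

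First I would discretize $\tau$ from above. Define
$$\tau_n=\frac{k}{2^n}\quad\text{on}\quad\left\{\frac{k-1}{2^n}\le\tau<\frac{k}{2^n}\right\},\qquad k\ge 1,$$
so that each $\tau_n$ is a stopping time taking values in the countable set $\{k/2^n:k\ge 1\}$, with $\tau_n\ge\tau$ and $\tau_n\downarrow\tau$. Since $\tau\le\tau_n$, we have $\mathcal{F}_\tau\subseteq\mathcal{F}_{\tau_n}$, so any fixed $A\in\mathcal{F}_\tau$ remains admissible at every level $n$. I would then prove the identity for the discrete times $\tau_n$ by decomposing $A=\bigcup_k\left(A\cap\{\tau_n=k/2^n\}\right)$, observing that $A\cap\{\tau_n=k/2^n\}\in\mathcal{F}_{k/2^n}^+$. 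On each such piece $Y_{\tau_n}\circ\theta_{\tau_n}=Y_{k/2^n}\circ\theta_{k/2^n}$, and applying the ordinary Markov property at the deterministic time $k/2^n$ (with the fixed bounded measurable function $Y_{k/2^n}$) gives
$$\mathbb{E}^x\left(Y_{k/2^n}\circ\theta_{k/2^n}\,\mathbb{I}_{A\cap\{\tau_n=k/2^n\}}\right)=\mathbb{E}^x\left(\left(\mathbb{E}^{B(k/2^n)}Y_{k/2^n}\right)\mathbb{I}_{A\cap\{\tau_n=k/2^n\}}\right).$$
Summing over $k$ and using $B(\tau_n)=B(k/2^n)$ on $\{\tau_n=k/2^n\}$ yields the identity with $\tau$ replaced by $\tau_n$.

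Finally I would let $n\to\infty$. Because the sample paths are continuous and $\tau_n\downarrow\tau$, on $\{\tau<\infty\}$ we have $B(\tau_n)\to B(\tau)$ and $\theta_{\tau_n}\to\theta_\tau$; the Feller continuity of Brownian motion makes $y\mapsto\mathbb{E}^yY$ continuous, so $\mathbb{E}^{B(\tau_n)}Y_{\tau_n}\to\mathbb{E}^{B(\tau)}Y_\tau$, and bounded convergence passes the limit through both expectations. The \emph{main obstacle} is precisely this limit step: justifying the convergence of $Y_{\tau_n}\circ\theta_{\tau_n}$ and the continuity of $y\mapsto\mathbb{E}^yY_s$ simultaneously in the time variable and the path shift. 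To handle it cleanly I would first prove the statement for a convenient generating class of $Y$ — for instance $Y_s(\omega)=g(s)\prod_i f_i(\omega(t_i))$ with $g$ and the $f_i$ bounded and continuous, where both convergences are transparent — and then extend to all bounded jointly measurable $Y_s(\omega)$ by a functional monotone class argument. The right-continuity of the filtration together with the Feller property is what powers the entire limiting passage.
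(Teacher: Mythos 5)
The paper states this theorem without proof --- only remarks on notation and the Feller property follow it (indeed, even the ordinary Markov property preceding it is given only as a reduction to an integrated identity). So there is no in-paper argument to compare against; your proposal supplies the canonical proof, in the style of Durrett or Liggett, and it is essentially correct. The dyadic stopping times $\tau_n=\lceil 2^n\tau\rceil/2^n$ are genuine stopping times, the decomposition $A\cap\{\tau_n=k/2^n\}\in\mathcal{F}_{k/2^n}^+$ is valid for $A\in\mathcal{F}_\tau$ (using $\{\tau<k/2^n\}=\bigcup_m\{\tau\le k/2^n-1/m\}$, which is where right-continuity of the filtration quietly enters), and applying the Markov property at each deterministic dyadic time with the section $Y_{k/2^n}(\cdot)$ is exactly right. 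You also correctly identify the delicate point: $Y_{\tau_n}\circ\theta_{\tau_n}\to Y_\tau\circ\theta_\tau$ and the continuity of $(y,s)\mapsto\mathbb{E}^yY_s$ fail for general bounded jointly measurable $Y$, so the limit $n\to\infty$ must be taken only for the multiplicative generating class $Y_s(\omega)=g(s)\prod_i f_i(\omega(t_i))$ (where path continuity gives $B(\tau_n+t_i)\to B(\tau+t_i)$ and translation invariance gives continuity of $y\mapsto\mathbb{E}^y\prod_i f_i(B(t_i))$), with the functional monotone class theorem then applied to the integrated identity at $\tau$ itself --- not to the limiting step --- since that identity is linear in $Y$ and stable under bounded monotone convergence by dominated convergence. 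One small point worth making explicit if you write this out in full: the joint measurability of $\phi(y,t)=\mathbb{E}^yY_t$, needed for the right-hand side $\phi(B(\tau),\tau)$ to make sense for general $Y$, is itself established by the same monotone class argument, starting from the generating class where it is transparent.
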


\textbf{Remarks}: The expression $Y_\tau\circ\theta_\tau$ means the mapping $\omega\rightarrow Y_{\tau(\omega)}(\theta_{\tau(\omega)}\omega)$. The right-hand side is the function $\phi(y,t)=\mathbb{E}^yY_t$ evaluated at $y=B(\tau)$ and $t=\tau$. We can replace the Brownian motion $B(\tau)$ by any Markov process with the Feller property (a Feller process). The theorem says that $B_{t+\tau}-B_\tau$ is also a Markov process for any fixed $t$ and stopping time $\tau$, and is independent of the stopped $\sigma$-algebra $\mathcal{F}_\tau$.

\subsection{Karhunen-Loève Expansion}

\begin{theorem}[Karhunen-Loève \citep{dabrowska2020stochastic}] 
If $\{X_t:t\in[0,1]\}$ is a process with zero mean and finite variance, then it admits a decomposition
\begin{align}
    X_t &= \sum_{n=1}^\infty Y_n\phi_n(t),
\end{align}
where $Y_n$'s are pairwise uncorrelated random variables and $\phi_n$'s form an orthonormal basis in $L^2([0,1])$ determined by $K(s,t) = \text{Cov}(X_s,X_t)$, the covariance of the process.
\end{theorem}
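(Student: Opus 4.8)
The plan is to build an orthonormal basis out of the eigenfunctions of the covariance operator and then to identify the coefficients $Y_n$ as the projections of the process onto these eigenfunctions. First I would introduce the covariance kernel $K(s,t)=\text{Cov}(X_s,X_t)=\mathbb{E}[X_sX_t]$ (the second equality using the zero-mean assumption) and the associated integral operator $T_K$ on $L^2([0,1])$ defined by $(T_Kf)(s)=\int_0^1 K(s,t)f(t)\,dt$. Because $K$ is symmetric and positive semidefinite, $T_K$ is a self-adjoint, positive operator; assuming $K$ is continuous (so that $T_K$ is Hilbert--Schmidt and hence compact), the spectral theorem for compact self-adjoint operators yields an orthonormal basis $\{\phi_n\}$ of $L^2([0,1])$ consisting of eigenfunctions, with $T_K\phi_n=\lambda_n\phi_n$ and $\lambda_n\ge 0$.

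Next I would invoke Mercer's theorem, which upgrades the $L^2$ spectral decomposition to the pointwise (indeed uniform and absolute) expansion $K(s,t)=\sum_{n=1}^\infty \lambda_n\phi_n(s)\phi_n(t)$. With the basis in hand, I would define the coefficients by $Y_n=\int_0^1 X_t\phi_n(t)\,dt$ and verify the two required properties. The coefficients are automatically mean zero, and a Fubini interchange of expectation and integration gives $\mathbb{E}[Y_mY_n]=\int_0^1\int_0^1 K(s,t)\phi_m(s)\phi_n(t)\,ds\,dt$; feeding in the eigenrelation $\int_0^1 K(s,t)\phi_n(t)\,dt=\lambda_n\phi_n(s)$ collapses this to $\lambda_n\delta_{mn}$, which establishes pairwise uncorrelatedness and identifies $\lambda_n=\text{Var}(Y_n)$.

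Finally I would prove that the series converges to $X_t$ in mean square for each fixed $t$. Expanding $\mathbb{E}[(X_t-\sum_{n=1}^N Y_n\phi_n(t))^2]$ and using $\mathbb{E}[X_tY_n]=\lambda_n\phi_n(t)$ together with the orthogonality just established, the cross term and the quadratic term combine so that the error reduces to $K(t,t)-\sum_{n=1}^N\lambda_n\phi_n(t)^2$. By Mercer's theorem the diagonal sum $\sum_n\lambda_n\phi_n(t)^2$ converges to $K(t,t)$, so the mean-square error tends to $0$, which is precisely the claimed decomposition.

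I expect the main obstacle to be Mercer's theorem rather than any of the Hilbert-space bookkeeping: the spectral theorem alone only delivers an $L^2$-convergent kernel expansion, whereas the mean-square reconstruction of $X_t$ at each individual $t$ requires the pointwise convergence of the diagonal series $\sum_n\lambda_n\phi_n(t)^2\to K(t,t)$. This is where a genuine regularity hypothesis on $K$ (continuity, which the bare ``finite variance'' statement glosses over) must be supplied, and verifying the uniform convergence underlying Mercer's theorem is the real analytic content of the argument.
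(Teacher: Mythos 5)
Your proposal is correct and is essentially the standard Mercer-based argument; the paper itself gives no proof of this theorem, only a remark sketching exactly your construction (take $\phi_n$ to be eigenfunctions of $K$, set $Y_n = \langle X, \phi_n\rangle$, and note the $Y_n$ are uncorrelated with variance $\lambda_n$). You also rightly flag the continuity hypothesis on $K$ needed for Mercer's theorem and the mean-square reconstruction, a regularity condition the paper's bare ``finite variance'' statement glosses over and that its remark implicitly assumes via positive definiteness and completeness of the eigenfunctions.
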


\textbf{Remarks}: If $K(s,t)$ is a positive definite kernel and the associated eigenfunctions form a complete orthonormal sequence of $L^2([0,1])$, then we can take $Y_n = \langle X, \phi_n \rangle$ where $\phi_n$'s are eigenfunctions of $K$. In this case, $Y_n$'s are uncorrelated mean zero variables with variance $\lambda_n$, the eigenvalues of $K(s,t)$.

\begin{example}[Brownian Motion on the unit interval]
Standard Brownian Motion on $[0,1]$ has covariance $K(s,t) = s \wedge t$ and can be decomposed as
\[
K(s,t) = \sum_{j=1}^\infty \lambda_j \phi_j(s)\phi_j(t)
\]
where ($j\ge 1$)
\begin{align*}
    &\lambda_j = \frac{4}{(2j-1)^2\pi^2},\ \phi_j(t) = \sqrt{2}\sin\left(\left(j-\frac{1}{2}\right)\pi t\right).
\end{align*}
Moreover, Brownian motion has the same distribution as
\[
W_t \sim \sum_{j=1}^\infty\sqrt{\lambda_j}V_j\phi_j(t),
\]
where $V_j, j\ge 0$ are i.i.d. $\mathcal{N}(0,1)$ variables. The derivation of $\phi_j$ boils down to solving differential equations and its distribution relies on the convergence of Riemann sums.
\end{example}

\begin{example}[Functional PCA for longitudinal biomarkers]
For a biostatistician, the Karhunen-Loève expansion is the probability engine behind functional principal component analysis (FPCA) \citep{ramsay2005functional,yao2005functional}. Suppose \(X_i(t)\) is a centered subject-specific biomarker trajectory on \([0,1]\), such as standardized glucose, creatinine, viral load, or activity intensity. If
\[
K(s,t)=\operatorname{Cov}\{X_i(s),X_i(t)\},
\]
then
\[
X_i(t)=\sum_{j=1}^{\infty}\xi_{ij}\phi_j(t),
\qquad
\operatorname{Var}(\xi_{ij})=\lambda_j,
\]
where \(\phi_j\)'s are the eigenfunctions of \(K\). With noisy and irregular clinic visits, the observed data are often closer to
\[
Y_{ij}=\mu(t_{ij})+\sum_{k=1}^{K}\xi_{ik}\phi_k(t_{ij})+\epsilon_{ij},
\]
where the FPCA scores \(\xi_{ik}\) summarize the patient's latent trajectory. In practice, only a few scores may be needed to describe clinically meaningful shapes: baseline elevation, early rebound, late deterioration, or cyclic instability. These scores can then enter a regression, clustering analysis, joint longitudinal-survival model, or risk prediction tool. If the working covariance is Brownian, \(K(s,t)=s\wedge t\), the eigenfunctions are sine waves with decreasing eigenvalues. If the working covariance is a Brownian bridge, \(K(s,t)=s\wedge t-st\), the endpoint is anchored, which is useful for residual trajectories constrained to return to zero after centering.
\end{example}

\section{Path Properties, Diagnostics, and Applications}

\subsection{A Biostatistical Field Guide}

Brownian motion enters biostatistics most cleanly when the scientific object is continuous in time but observed through a coarse, noisy schedule. Clinic visits are discrete, assays are noisy, and adverse events arrive at inconvenient moments; the latent biological process does not care. Brownian models give us a disciplined way to let the hidden process move between observations.

\begin{table}[H]
\centering
\small
\caption{Recurring Brownian motifs in biostatistical modeling.}
\label{tab:biostats-field-guide}
\begin{tabularx}{\textwidth}{>{\RaggedRight\arraybackslash}p{0.22\textwidth}>{\RaggedRight\arraybackslash}p{0.23\textwidth}>{\RaggedRight\arraybackslash}p{0.27\textwidth}>{\RaggedRight\arraybackslash}X}
\toprule
Model & Typical data & Brownian ingredient & Scientific use \\
\midrule
Latent biomarker diffusion & Irregular longitudinal measurements, assay error, missing visits & 
\(dY_i^\ast(t)=\{m_i(t)-\kappa_iY_i^\ast(t)\}dt+\sigma_i dB_i(t)\), \(Y_{ij}=Y_i^\ast(t_{ij})+\epsilon_{ij}\) &
Separates biological movement from measurement noise; useful for disease progression and treatment response. \\
Functional PCA for biomarker curves & Sparse longitudinal visits, dense wearables, smoothed lab trajectories &
\(Y_{ij}=\mu(t_{ij})+\sum_{k=1}^{K}\xi_{ik}\phi_k(t_{ij})+\epsilon_{ij}\), with \(\phi_k\)'s from the covariance operator &
Compresses each subject's curve into interpretable scores for phenotyping, prediction, and downstream survival or treatment-response models. \\
Dynamic frailty survival model & Event times with time-varying covariates and unmeasured risk &
\(\lambda_i(t)=\lambda_0(t)\exp\{Z_i^\top\gamma+U_i(t)\}\), \(dU_i(t)=-\rho U_i(t)dt+\eta dB_i(t)\) &
Lets unobserved inflammation, adherence, or vulnerability drift over follow-up rather than remain fixed. \\
Sequential clinical trial monitoring & Interim test statistics indexed by information fraction &
\(Z(t)=\theta t+B(t)\), with stopping when \(Z(t)\) crosses a boundary \(b(t)\) &
Turns efficacy or safety monitoring into a first-passage problem; this is the geometry behind many group-sequential designs \citep{lan1983discrete}. \\
Empirical-process diagnostics & Distributional checks, calibration curves, nonparametric estimators &
\(\alpha_n(t)=\sqrt n\{F_n(t)-F(t)\}\Rightarrow B(F(t))-F(t)B(1)\) &
Explains Kolmogorov-Smirnov, Cramér-von Mises, and calibration-band behavior through the Brownian bridge. \\
Recurrent-event intensity & Hospitalizations, infections, exacerbations, device alerts &
\(N_i(t)-\int_0^t\lambda_i(s)ds\) as a martingale with Brownian limits after scaling &
Links counting-process residuals to Gaussian fluctuation limits for model checking and surveillance. \\
Wearable or EHR state-space model & Dense sensor traces, irregular lab panels, noisy vital signs &
\(dX_i(t)=\mu_i(t)dt+\sigma_i(t)dB_i(t)\), \(Y_{ij}=X_i(t_{ij})+\epsilon_{ij}\) &
Lets the analyst smooth noisy observation streams without pretending the body moves only at visit times. \\
Degradation modeling & Biomarker deterioration, device wear, organ-function decline, assay drift &
\(D_i(t)=D_i(0)+\nu_i t+\sigma_iB_i(t)\), with failure time \(\tau_i=\inf\{t:D_i(t)\ge c\}\) &
Turns progressive damage into a first-passage problem, linking longitudinal decline to time-to-threshold endpoints \citep{kahle2016degradation}. \\
SDE and Black-Merton-Scholes template & Continuous-time stochastic models, Itô calculus, diffusion-based likelihoods &
\(dS_t=\mu S_tdt+\sigma S_tdB_t\), giving \(S_t=S_0\exp\{(\mu-\sigma^2/2)t+\sigma B_t\}\) &
Provides the canonical SDE solved by Itô's formula; useful as a transferable template before building biomedical diffusions. \\
\bottomrule
\end{tabularx}
\end{table}

This table is intentionally pragmatic. Longitudinal and joint-model pieces let a biomarker trajectory move as a noisy diffusion while an event process reads that trajectory through a hazard model \citep{diggle2002analysis,henderson2000joint,rizopoulos2012joint}. Counting-process notation makes martingales and predictable variation the bookkeeping system for censored survival and recurrent-event data \citep{andersen1982cox,andersen1993statistical,fleming1991counting,aalen2008survival}. Trial monitoring turns the Brownian first-passage picture behind the reflection principle into a stopping rule \citep{lan1983discrete,jennison2000group}. Empirical-process diagnostics turn goodness-of-fit, calibration, and distributional checking into Brownian-bridge geometry \citep{donsker1952justification,billingsley2013convergence,vandervaart1996weak,kosorok2008introduction}. EHR and wearable studies use state-space diffusions less as molecular truth than as principled interpolation: uncertainty grows during long gaps and tightens when dense observations arrive \citep{kalman1960new,durbin2012time}. The unifying point is that these are not separate tricks; they are different uses of the same stochastic structure.

\subsection{Blumenthal's 0-1 Law}

\begin{theorem}[Blumenthal's 0-1 Law \citep{liggett2010continuous}]
\begin{itemize}
    \item[(a)] If $Y$ is a bounded random variable, then for every $x$
    \begin{align}
        \mathbb{E}^x(Y|\mathcal{F}_s^+) = \mathbb{E}^x(Y|\mathcal{F}_s^0)\text{ a.s. }P^x.
    \end{align}
    \item[(b)] If $A \in \mathcal{F}_0^+$, then
    \begin{align}
        P^x(A) = 0\text{ or }1.
    \end{align}
\end{itemize}
\end{theorem}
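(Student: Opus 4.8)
The plan is to prove part (a) first---it is the substantive part---and then deduce (b) as a short corollary by specializing to $s=0$. The engine throughout is the Markov Property established above, which says $\mathbb{E}^x(Z\circ\theta_s\mid\mathcal{F}_s^+)=\mathbb{E}^{B(s)}Z$, together with the observation that the right-hand side, being a function of $B(s)$, is automatically $\mathcal{F}_s^0$-measurable.

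First I would reduce (a) to showing that $\mathbb{E}^x(Y\mid\mathcal{F}_s^+)$ is $\mathcal{F}_s^0$-measurable: once that is known, the tower property gives $\mathbb{E}^x(Y\mid\mathcal{F}_s^0)=\mathbb{E}^x\big(\mathbb{E}^x(Y\mid\mathcal{F}_s^+)\mid\mathcal{F}_s^0\big)=\mathbb{E}^x(Y\mid\mathcal{F}_s^+)$, because the inner conditional expectation is already $\mathcal{F}_s^0$-measurable and $\mathcal{F}_s^0\subseteq\mathcal{F}_s^+$. To verify the measurability I would test it on a generating multiplicative class, the functionals $Y=\prod_{i}f_i(B(t_i))$ with $0\le t_1<\cdots<t_n$ and bounded measurable $f_i$. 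Splitting the times at $s$ and writing $B(t_i)=B(r_i)\circ\theta_s$ for $t_i=s+r_i>s$, I factor $Y=U\cdot(V\circ\theta_s)$, where $U=\prod_{t_i\le s}f_i(B(t_i))$ is $\mathcal{F}_s^0$-measurable and $V=\prod_{t_i>s}f_i(B(r_i))$. Pulling out the $\mathcal{F}_s^+$-measurable factor $U$ and applying the Markov Property yields $\mathbb{E}^x(Y\mid\mathcal{F}_s^+)=U\cdot\mathbb{E}^{B(s)}V$, which is manifestly $\mathcal{F}_s^0$-measurable. I would then upgrade from this class to all bounded $\mathcal{H}$-measurable $Y$ by the functional monotone class theorem: the product functionals form a multiplicative family generating $\mathcal{H}$, and the set of $Y$ for which the claimed identity holds is a vector space closed under bounded monotone limits (using conditional dominated and monotone convergence for the two conditional expectations simultaneously).

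For part (b), set $s=0$ and take $A\in\mathcal{F}_0^+$ with $Y=\mathbb{I}_A$. Since $Y$ is $\mathcal{F}_0^+$-measurable, $\mathbb{E}^x(Y\mid\mathcal{F}_0^+)=\mathbb{I}_A$, so (a) gives $\mathbb{I}_A=\mathbb{E}^x(\mathbb{I}_A\mid\mathcal{F}_0^0)$ a.s.\ $P^x$. But under $P^x$ one has $B(0)=x$ almost surely, so $\mathcal{F}_0^0=\sigma(B(0))$ is $P^x$-trivial, and the conditional expectation collapses to the constant $\mathbb{E}^x\mathbb{I}_A=P^x(A)$. Thus $\mathbb{I}_A=P^x(A)$ a.s.; since the left-hand side takes only the values $0$ and $1$, we conclude $P^x(A)\in\{0,1\}$.

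I expect the \emph{main obstacle} to be the monotone-class extension in (a) rather than the Markov-property computation: one must check that the product functionals are measure-determining for $\mathcal{H}$ and that the identity---an almost-sure statement about two conditional expectations---is preserved under bounded monotone limits, handling the $P^x$-null sets coherently. The factorization $Y=U\cdot(V\circ\theta_s)$ and the bookkeeping of the shift $\theta_s$ on times straddling $s$ is the other delicate point, though it becomes routine once the convention $B(t)\circ\theta_s=B(s+t)$ is fixed.
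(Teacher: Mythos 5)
Your proposal is correct and follows essentially the same route as the paper: both prove (a) by testing on product functionals $Y=\prod_i f_i(B(t_i))$, factoring $Y$ as an $\mathcal{F}_s^0$-measurable part times a shifted part $V\circ\theta_s$, and invoking the Markov property to identify $\mathbb{E}^x(Y\mid\mathcal{F}_s^+)$ with the $\mathcal{F}_s^0$-measurable quantity $U\cdot\mathbb{E}^{B(s)}V$, then deduce (b) by taking $Y=\mathbb{I}_A$ at $s=0$. Your write-up is in fact slightly more complete than the paper's, since you make explicit the two steps the paper leaves implicit --- the functional monotone class extension beyond product functionals, and the $P^x$-triviality of $\mathcal{F}_0^0$ (via $B(0)=x$ a.s.\ under $P^x$) that collapses the conditional expectation to the constant $P^x(A)$.
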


\begin{proof}
For part (a), it is enough to consider $Y$ of the form 
\[
Y = \prod_{m=1}^n f_m(B(t_m))
\]
where $f_m$ are bounded and measurable. In this case, we write
\[
Y = Z(X\circ\theta_s),
\]
where $Z = \prod_{m:t_m\le s} f_m(B(t_m))$ and $X = \prod_{m:t_m>s} f_m(B(t_m-s))$. Then $Z\in\mathcal{F}_s^0 \subset\mathcal{F}_s^+$ and
\[
\mathbb{E}^x(X\circ\theta_s|\mathcal{F}_s^+) = \mathbb{E}^{B(s)}X \text{ a.s.}
\]
by the Markov property. Taking conditional expectation $\mathbb{E}^x(\cdot|\mathcal{F}_s^0)$ gives $\mathbb{E}^x(X\circ\theta_s|\mathcal{F}_s^0)=\mathbb{E}^{B(s)}X$.

For part (b), set $Y = \mathbb{I}_A$, and then
\[
\mathbb{I}_A = \mathbb{E}^x(Y|\mathcal{F}_0^+) = P^x(A|\mathcal{F}_0^+) = P^x(A|\mathcal{F}_0^0),
\]
where the last equality is from part (a). But given the initial information $\mathcal{F}_0^0$ and the initial state $x$, the set $A$ either contains $x$ or not. If it contains, then $\mathbb{I}_A = 1$ a.s.; otherwise, $\mathbb{I}_A = 0$ a.s.
\end{proof}

\begin{corollary} 
$P^0(\tau=0) = 1$ in each of the following two cases:
\begin{itemize}
    \item[(a)] $\tau = \inf\{t > 0 : B(t) > 0\}$.
    \item[(b)] $\tau = \inf\{t > 0 : B(t) = 0\}$.
\end{itemize}
\end{corollary}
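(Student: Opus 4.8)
The plan is to treat each case by the same two-move strategy: first show that $\{\tau = 0\}$ is a germ event, i.e.\ $\{\tau = 0\} \in \mathcal{F}_0^+$, so that Blumenthal's law forces $P^0(\tau = 0) \in \{0, 1\}$; then rule out the value $0$ by a uniform lower bound on $P^0(\tau \le t)$ as $t \downarrow 0$.

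For part (a), with $\tau = \inf\{t > 0 : B(t) > 0\}$, I would first argue measurability. Writing $\{\tau < \epsilon\} = \{\exists\, t \in (0,\epsilon) : B(t) > 0\}$ and using continuity of paths (the random set $\{t : B(t) > 0\}$ is open, so it meets the rationals whenever it is nonempty) to replace the uncountable quantifier by a countable one, one gets $\{\tau < \epsilon\} = \bigcup_{r \in (0,\epsilon) \cap \mathbb{Q}} \{B(r) > 0\} \in \mathcal{F}_\epsilon^0$. Intersecting over $\epsilon = 1/n$ yields $\{\tau = 0\} = \bigcap_n \{\tau < 1/n\} \in \bigcap_n \mathcal{F}_{1/n}^0 = \mathcal{F}_0^+$, so Blumenthal applies. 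For the lower bound, observe $\{B(t) > 0\} \subseteq \{\tau \le t\}$, whence $P^0(\tau \le t) \ge P^0(B(t) > 0) = \tfrac12$ by symmetry of $\mathcal{N}(0,t)$; letting $t \downarrow 0$ along the decreasing events $\{\tau \le t\} \downarrow \{\tau = 0\}$ gives $P^0(\tau = 0) \ge \tfrac12 > 0$, so the $0$-$1$ law forces $P^0(\tau = 0) = 1$.

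For part (b), with $\tau = \inf\{t > 0 : B(t) = 0\}$, I would avoid a direct computation and instead leverage part (a) together with the symmetry $-B \stackrel{d}{=} B$. Part (a) applied to $B$ shows that, almost surely, $B$ takes strictly positive values in every interval $(0, \epsilon)$; applied to $-B$ it shows $B$ takes strictly negative values in every such interval. By continuity of the sample path, the intermediate value theorem then produces a zero of $B$ in every $(0, \epsilon)$, whence $\tau = 0$ almost surely.

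The main obstacle is the measurability step in part (a): one must check that $\{\tau = 0\}$ genuinely lies in the germ field $\mathcal{F}_0^+$ rather than merely in some $\mathcal{F}_\epsilon^0$, and this is exactly where path continuity (to pass to rational times) and the definition $\mathcal{F}_0^+ = \bigcap_{t > 0} \mathcal{F}_t^0$ are used in tandem. The quantitative input---the constant $\tfrac12$---is immediate from Gaussian symmetry, and part (b) is then purely a continuity argument once part (a) is in hand.
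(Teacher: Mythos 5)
Your proposal is correct and follows essentially the same route as the paper: Blumenthal's 0--1 law plus the lower bound $P^0(\tau\le t)\ge P^0(B(t)>0)=\tfrac12$ for part (a), and for part (b) the combination of part (a) with the symmetry $-B\stackrel{d}{=}B$ and the intermediate value theorem, which is exactly the paper's sequence $B(l_i)>0$, $B(t_i)=0$, $B(r_i)<0$ decreasing to $0$. The only difference is that you spell out the measurability step $\{\tau=0\}\in\mathcal{F}_0^+$ via rational times, which the paper merely asserts --- a worthwhile addition, not a change of method.
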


\begin{proof}
In both cases, the event $\{\tau=0\}\in\mathcal{F}_0^+$ but not in $\mathcal{F}_0^0$. Hence, by Blumenthal's 0-1 law, it is enough to show the event has positive probability. For part (a), given any positive $t$, $P^0(\tau\le t) \ge P^0(B(t) > 0) = \frac{1}{2}$. Then we take $t\downarrow 0$. For part (b), we just replace $B(t)>0$ with $B(t)<0$ in part (a) and apply the path continuity of $B(\cdot)$. In other words, we can pick a sequence of positive numbers that goes to $0$: $l_1 > t_1 > r_1 > l_2 > t_2 > r_2 > \cdots$ such that $B(l_i)>0$, $B(t_i)=0$ and $B(r_i)<0$ for $i=1,2,\cdots$.
\end{proof}

    We have an analogue of Kolmogorov's 0-1 law for Brownian motion.
    \begin{theorem}[Kolmogorov's 0-1 Law for Brownian Motion] Define the tail $\sigma$-algebra
    $$\mathcal{T}=\bigcap_{t>0}\mathcal{G}_t$$
    where $\mathcal{G}_t=\sigma(B_s:s\ge t)$.
        Then for any $A\in\mathcal{T}$, we have $P^x(A)=0$ or $1$, independent of $x$.
    \end{theorem}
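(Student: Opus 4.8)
The plan is to separate the statement into two independent assertions: first that $P^0(A)\in\{0,1\}$ for every tail event $A$, and second that $P^x(A)=P^0(A)$ for every $x$. The first I would obtain by \emph{time inversion}. The process $W_t:=tB_{1/t}$ for $t>0$, $W_0:=0$, is again a standard Brownian motion under $P^0$: by part (b) of the definition it suffices that $W$ be centered Gaussian with the right covariance, and indeed $\mathrm{Cov}(W_s,W_t)=st\,(1/s\wedge 1/t)=s\wedge t$. The only nonroutine point here is almost sure continuity at the origin, $\lim_{t\downarrow 0}tB_{1/t}=0$, which follows from the law-of-large-numbers statement $B_u/u\to 0$ a.s. as $u\to\infty$.

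Next I would translate the tail field of $B$ into the germ field of $W$. For $s\ge t$ write $B_s=sW_{1/s}$, so that $\sigma(B_s:s\ge t)=\sigma(W_u:0<u\le 1/t)$; since $W_0=0$ is deterministic, adjoining it changes nothing and this coincides with $\mathcal F^0_{1/t}$ in the filtration of $W$. Intersecting over $t>0$ gives
\[
\mathcal T=\bigcap_{t>0}\sigma(B_s:s\ge t)=\bigcap_{\varepsilon>0}\mathcal F^0_\varepsilon,
\]
which is exactly the germ $\sigma$-algebra $\mathcal F_0^+$ of the inverted Brownian motion $W$. Hence every $A\in\mathcal T$ corresponds, through the inversion map, to a germ event for $W$, and Blumenthal's 0-1 law yields $P^0(A)\in\{0,1\}$.

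For independence from $x$ I would argue directly rather than through inversion (which does not send $P^x$ to a clean law). Fix $A\in\mathcal T\subseteq\mathcal G_t$ and represent $\mathbb I_A=G_t\big((B_s)_{s\ge t}\big)$. Writing $B_s=B_t+(B_s-B_t)$ and using that the increments $(B_s-B_t)_{s\ge t}$ are independent of $\mathcal F^0_t$ with a law not depending on $x$, the Markov property gives $P^x(A\mid\mathcal F^0_t)=\Psi_t(B_t)$ for some measurable $\Psi_t$ with $0\le\Psi_t\le 1$ that is independent of $x$. Taking expectations and using $B_t\sim\mathcal N(x,t)$ under $P^x$,
\[
\big|P^x(A)-P^0(A)\big|=\Big|\int \Psi_t(y)\big(\phi_{x,t}(y)-\phi_{0,t}(y)\big)\,dy\Big|\le \big\|\phi_{x,t}-\phi_{0,t}\big\|_{1},
\]
where $\phi_{x,t}$ denotes the $\mathcal N(x,t)$ density. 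Since the total variation distance between $\mathcal N(x,t)$ and $\mathcal N(0,t)$ tends to $0$ as $t\to\infty$, letting $t\to\infty$ forces $P^x(A)=P^0(A)$, and combined with the first part this gives $P^x(A)\in\{0,1\}$ for all $x$.

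The step I expect to be the main obstacle is the $\sigma$-algebra identification $\mathcal T=\mathcal F_0^+$ under time inversion: one must verify carefully that $\sigma(W_u:0<u\le\varepsilon)$ genuinely coincides with $\mathcal F^0_\varepsilon$ (which is where the deterministic value $W_0=0$ is used) and that the continuous-time intersections match up. The accompanying continuity of $W$ at $0$ is the other point needing real input, namely $B_u/u\to0$, whereas the $x$-independence reduces to the elementary fact that two widely spread Gaussians differing by a fixed shift become indistinguishable in total variation.
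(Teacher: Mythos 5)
The paper states this theorem without proof, so there is no internal argument to compare against; judged on its own, your proposal is correct and is in fact the classical route (essentially the argument in Durrett, which the paper cites), and it fits the paper's toolkit since it reduces to the Blumenthal 0-1 law that the paper \emph{does} prove. Both halves check out. For the first: the covariance computation $st\,(1/s\wedge 1/t)=s\wedge t$ is right; the generator-by-generator identification $\sigma(B_s:s\ge t)=\sigma(W_u:0<u\le 1/t)$ holds because $B_s=sW_{1/s}$ and $W_u=uB_{1/u}$ make each family measurable with respect to the other; and since $W_0\equiv 0$ is deterministic, $\sigma(W_u:0<u\le\varepsilon)=\sigma(W_u:0\le u\le\varepsilon)$, so the intersection over $\varepsilon$ is exactly the germ field $\mathcal{F}_0^+$ of $W$, to which Blumenthal's law applies once $W$ is known to be a Brownian motion under $P^0$. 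For the second: on the canonical space one has $\mathcal{G}_t=\theta_t^{-1}\mathcal{H}$, so $A=\theta_t^{-1}A'$ and the Markov property gives $P^x(A)=\int\Psi_t(y)\,\phi_{x,t}(y)\,dy$ with $\Psi_t(y)=P^y(A')$ free of $x$; your bound $\bigl|P^x(A)-P^0(A)\bigr|\le\bigl\|\phi_{x,t}-\phi_{0,t}\bigr\|_1\to 0$ as $t\to\infty$ is valid (the total variation distance between $\mathcal{N}(x,t)$ and $\mathcal{N}(0,t)$ is of order $|x|/\sqrt{t}$), which forces $x\mapsto P^x(A)$ to be constant and, combined with the first half, yields the full statement. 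Two deferred steps deserve a line each in a write-up, though neither is a gap: (i) the a.s.\ continuity of $W$ at $0$, i.e.\ $B_u/u\to 0$ a.s., needs its short proof (SLLN along integer times plus a maximal-inequality or Borel--Cantelli control of the oscillation of $B$ on $[n,n+1]$), since it is precisely what makes $W$ a continuous process eligible for Blumenthal's law; (ii) the identity $\sigma(B_s:s\ge t)=\theta_t^{-1}\sigma(B_u:u\ge 0)$ used to define $\Psi_t$ is a canonical-space fact worth stating explicitly.
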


\subsection{Zero Set}

Consider the zero set (for a given $\omega\in\Omega$)
\begin{align}
    Z(\omega) &= \left\{t\in\mathbb{R}_+ : B_t(\omega)=0\right\}
\end{align}

\begin{theorem}[Zero Set]
With probability 1, the set $Z$ is perfect, and hence closed and uncountable. The Lebesgue measure of $Z$ is $0$ a.s.
\end{theorem}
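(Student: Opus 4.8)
The plan is to prove the three assertions separately: $Z$ is closed, $Z$ has no isolated points, and $Z$ has Lebesgue measure zero. A closed set with no isolated points is perfect by definition, and a nonempty perfect subset of a complete metric space is uncountable; since $0 \in Z$ (as $B_0 = 0$), perfectness will immediately yield that $Z$ is uncountable. Closedness is the easy part: because $t \mapsto B_t(\omega)$ is continuous almost surely (Theorem~\ref{thm:exist}), $Z(\omega)$ is the preimage of the closed set $\{0\}$ under a continuous map and is therefore closed.

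For the absence of isolated points, which is the core of the argument, I would introduce for each rational $q \geq 0$ the stopping time $\tau_q = \inf\{t \geq q : B_t = 0\}$. Brownian motion is recurrent, so $\tau_q < \infty$ a.s., and path continuity gives $B_{\tau_q} = 0$, whence $\tau_q \in Z$. Applying the strong Markov property at $\tau_q$, the shifted process $\{B_{\tau_q + s}\}_{s \geq 0}$ is a standard Brownian motion started at $0$; the corollary to Blumenthal's $0$-$1$ law (case (b)) applied to it yields $\inf\{s > 0 : B_{\tau_q + s} = 0\} = 0$ a.s. In words, each $\tau_q$ is a strictly decreasing limit of points of $Z$ and so is not isolated from the right. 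Intersecting over the countably many rationals $q$, I obtain a single event of probability one on which \emph{every} $\tau_q$ fails to be right-isolated.

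On this event the rest is a deterministic topological reduction. Suppose some $t \in Z$ were isolated; then there is $\varepsilon > 0$ with $(t - \varepsilon, t) \cap Z = \emptyset$, and choosing any rational $q \in (t - \varepsilon, t)$ forces $\tau_q = t$, since $[q, t)$ contains no zero while $t$ is one. But $\tau_q$ is not right-isolated, contradicting the isolation of $t$; the boundary point $t = 0$ is excluded directly by the same corollary. Hence $Z$ has no isolated points a.s. For the measure statement I would invoke Tonelli's theorem,
\[
\E\big[\mathrm{Leb}(Z \cap [0,T])\big] = \E\int_0^T \mathbb{I}\{B_t = 0\}\,dt = \int_0^T P(B_t = 0)\,dt = 0,
\]
because $B_t \sim \mathcal{N}(0,t)$ is atomless at $0$ for every $t > 0$; vanishing expectation of the nonnegative random variable $\mathrm{Leb}(Z \cap [0,T])$ gives $\mathrm{Leb}(Z \cap [0,T]) = 0$ a.s., and letting $T \to \infty$ through the integers completes the claim.

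I expect the main obstacle to be the second step: justifying the application of the strong Markov property at the hitting time $\tau_q$ (that $\tau_q$ is a stopping time with $B_{\tau_q} = 0$, and that $\{B_{\tau_q + s}\}_{s}$ is again a Brownian motion from $0$ to which the corollary applies), and then organizing the countable family $\{\tau_q : q \in \mathbb{Q}_{\geq 0}\}$ so that the topological reduction captures \emph{every} candidate isolated point. By comparison, closedness and the measure-zero computation are routine.
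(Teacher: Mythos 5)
Your proposal is correct and takes essentially the same route as the paper's proof: closedness via $Z = B^{-1}(\{0\})$ under a continuous path, measure zero via Fubini--Tonelli since $P(B_t = 0) = 0$, and absence of isolated points via the rational-indexed hitting times $\tau_q = \inf\{t \ge q : B_t = 0\}$, the strong Markov property, and the corollary to Blumenthal's 0-1 law, with the same left-limit/right-limit dichotomy. If anything, your write-up is more careful than the paper's on the quantifier issue --- you fix a single probability-one event over the countably many rationals \emph{before} running the deterministic topological argument, a step the paper leaves implicit.
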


\textbf{Remark}: A set is perfect if it is closed with no isolated points, i.e., every point is a limit point. By the Markov property, we conclude that the level set $L(a,\omega)=\{t\in\mathbb{R}_+ : B_t(\omega)=a\}$ has the same properties as the zero set $Z(\omega)$ does.

Figure~\ref{fig:zero-local-time-sim} gives a simulation-based picture of
this theorem. On a finite grid we can only mark sign-change intervals, not
the full zero set. Even so, the rug plot already suggests the central
paradox: the path returns to zero repeatedly, while the amount of ordinary
Lebesgue time spent exactly at zero is still negligible.

\begin{proof}
The $0$ Lebesgue measure is easy to show:
\[
\mathbb{E}m(Z) = \mathbb{E}\int_{\mathbb{R}_+}\mathbb{I}_Z(t)dt = \int_{\mathbb{R}_+}\mathbb{E}\mathbb{I}_Z(t)dt = \int_{\mathbb{R}_+}\mathbb{P}(B(t)=0)dt = 0
\]
For a given $\omega$, $Z = B^{-1}(\{0\})$, i.e., the inverse image of $\{0\}$ under the continuous mapping $B(\cdot,\omega)$. Hence, $Z$ is closed and its complement is a countable union of open intervals. We write
\[
Z^c = \bigcup_{n\in\mathbb{N}}(l_n,r_n)
\]
and $r_n$ are stopping times but $l_n$ are not.

Next, let $t\in Z$ be any zero of the path, then $t$ is either a limit point of $Z$ from the left or not.

\begin{itemize}
    \item Case I: If $t$ is a limit point from the left, then $t$ is not isolated.
    \item Case II: If $t$ is not a limit point from the left, then there exists $\epsilon>0$ so that there is no zero within the interval $(t-\epsilon,t)$. Define the random variable
    \[
    T_a(\omega) = \inf\{t\ge a : B(\omega, t) = 0\},
    \]
    where $a$ is any positive rational number. It is a stopping time:
    \[
    \{T_a \le s\} = 
    \begin{cases}
        \emptyset & \text{if } s < a\\
        \left(\left\{B(t)>0,t\in[a,s]\right\}\cap\left\{B(t)<0,t\in[a,s]\right\}\right)^c & \text{if else}
    \end{cases}
    \]
\end{itemize}

By Strong Markov Property, given the stopping field $\mathcal{F}_{T_a}$, $B\circ \theta_{T_a}$ is another standard BM where $\theta$ is the shift operator. Since for a standard BM, given any $\epsilon>0$, there are infinitely many $0$'s within the interval $[T_a, T_a+\epsilon)$. Hence, there must be a rational $a\in(t-\epsilon,t)$ such that $t = T_a$. In this case, $t$ is a limit point of $Z$ from the right.
\end{proof}

\subsection{Non-differentiability}

We start with a proposition that describes the peculiarities of the Brownian motion path, and then jump into pointwise and global non-differentiability of $B$.

\begin{proposition}
Almost surely the Brownian motion $B$ is not monotone on any interval $[s,t]$.
\end{proposition}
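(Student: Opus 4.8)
The plan is to reduce the statement about all (uncountably many) intervals to a countable family indexed by rational endpoints, and then to bound the probability of monotonicity on a single fixed interval using the independent-increments structure. The key observation is that if $B$ is monotone on some nondegenerate interval $[s,t]$, then it is monotone on every subinterval; in particular, choosing rationals $a,b$ with $s \le a < b \le t$, the path is monotone on $[a,b]$. Hence the event that $B$ is monotone on \emph{some} interval is contained in
$$\bigcup_{\substack{a,b \in \mathbb{Q} \\ 0 \le a < b}} M_{a,b}, \qquad M_{a,b} = \{B \text{ is monotone on } [a,b]\},$$
a countable union, so it suffices to prove $P(M_{a,b}) = 0$ for each fixed rational pair $a<b$.

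First I would fix $a<b$ and split the event as $M_{a,b} = M_{a,b}^{\uparrow} \cup M_{a,b}^{\downarrow}$, the non-decreasing and non-increasing cases. For $n \ge 1$, partition $[a,b]$ into $n$ equal pieces $a = t_0 < t_1 < \cdots < t_n = b$. If the path is non-decreasing, every increment satisfies $B(t_i) - B(t_{i-1}) \ge 0$. By property (i) these increments are independent, and by property (ii) each is a nondegenerate centered Gaussian, hence nonnegative with probability exactly $1/2$. Therefore
$$P\bigl(M_{a,b}^{\uparrow}\bigr) \le P\left(\bigcap_{i=1}^{n} \bigl\{B(t_i) - B(t_{i-1}) \ge 0\bigr\}\right) = \left(\frac{1}{2}\right)^{n}.$$
The identical bound holds for $M_{a,b}^{\downarrow}$ by the symmetry $B \overset{d}{=} -B$, so $P(M_{a,b}) \le 2^{1-n}$ for every $n$, which forces $P(M_{a,b}) = 0$.

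Finally I would assemble the pieces: a union bound over the countable index set gives $P\bigl(\bigcup_{a<b} M_{a,b}\bigr) = 0$, and by the containment above this shows that with probability $1$ the path $B$ is not monotone on any interval. I do not expect a serious obstacle; the independence of increments and Gaussian symmetry do all the real work, and the estimate is clean precisely because the increments over a uniform partition are i.i.d. The only point deserving care is the reduction step — checking that monotonicity on an arbitrary real interval genuinely passes to a rational subinterval and that the resulting index set is countable — but this is immediate since any nondegenerate interval contains a pair of rationals.
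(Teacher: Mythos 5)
Your proof is correct and follows essentially the same route as the paper: reduce via a countable union over rational endpoints, then bound the probability of monotonicity on a fixed interval by $2^{-n}$ using the sign pattern of $n$ independent nondegenerate Gaussian increments over a uniform partition. You are in fact slightly more careful than the paper's sketch—handling the non-decreasing and non-increasing cases explicitly and justifying the passage from an arbitrary interval to a rational subinterval, where the paper instead invokes stationarity to reduce to $[0,1]$ and treats only one monotone direction—but these are refinements of the same argument, not a different approach.
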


\begin{proof}
It is enough to show that the following set has probability $0$:
\begin{align*}
    A &= \bigcup_{s,t\in \mathbb{Q^+}}\left\{ \omega\in\Omega:B(\omega)\text{ is monotone on [s,t]} \right\}\\
    &= \bigcup_{s,t\in \mathbb{Q^+}}A_{st}
\end{align*}
By \(\sigma\)-additivity and stationarity of \(B\), it suffices to prove the representative case
\[
A_{01}=\{B\text{ is monotone on }[0,1]\}
\]
has probability \(0\).

Let 
\[
B_n = \bigcap_{i=1}^n\left\{ B\left(\frac{i-1}{n}\right)-B\left(\frac{i}{n}\right)\ge 0 \right\}
\]
so that $A_{01} = \cap_{n=1}^\infty B_n$. But
\[
P^0(B_n) = 2^{-n} \rightarrow 0 \text{ as }n\rightarrow\infty.
\]
\end{proof}

\begin{lemma}[Pointwise Nonsmoothness] 
For a fixed $t\ge 0$, 
\begin{align}
    \mathbb{P}(B(\cdot,\omega)\text{ is not differentiable at }t) = 1.       
\end{align}
Further, for $\mathbb{P}-a.s.$ $\omega\in\Omega$,
\begin{align}
    m(A) = 0,\ A = \{t\ge0:B(\cdot,\omega)\text{ is differentiable at }t\}.
\end{align}
\end{lemma}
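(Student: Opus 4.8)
The plan is to prove the two assertions separately, reducing both to the single-point statement: the first via scaling together with Blumenthal's 0-1 law, and the second via a Tonelli argument modeled on the zero-set computation above. For the first claim, by the Markov property it suffices to treat $t=0$ and show that $B$ fails to have a finite right derivative at $0$. I would introduce the random variable $L=\limsup_{h\downarrow 0} B(h)/\sqrt{h}$, which — exactly as noted in the earlier discussion of $\mathcal{F}_0^+$ with $f(u)=\sqrt{u}$ — is $\mathcal{F}_0^+$-measurable, since it depends only on the germ of the path at $0$.

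The key step is to show $L=+\infty$ a.s. By Blumenthal's 0-1 law, for each $a>0$ the event $\{L\ge a\}\in\mathcal{F}_0^+$ has probability $0$ or $1$, so it is enough to show it is positive. I would pick any sequence $h_n\downarrow 0$ and apply reverse Fatou to the events $\{B(h_n)/\sqrt{h_n}\ge a\}$: since $B(h_n)/\sqrt{h_n}\sim\mathcal{N}(0,1)$ by scaling, each such event has probability $p_a=\mathbb{P}(\mathcal{N}(0,1)\ge a)>0$, whence $\mathbb{P}(L\ge a)\ge p_a>0$ and therefore equals $1$. Letting $a\to\infty$ gives $L=+\infty$ a.s. Finally, were $B'(0)$ finite we would have $B(h)/\sqrt{h}=\sqrt{h}\,(B(h)/h)\to 0$, contradicting $L=+\infty$; hence $B$ is a.s. not differentiable at $0$, and by stationarity of increments at every fixed $t$.

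For the second claim I would mirror the zero-set calculation. Setting $S=\{(t,\omega):B(\cdot,\omega)\text{ is differentiable at }t\}$, one has $m(A)(\omega)=\int_0^\infty \mathbb{I}_S(t,\omega)\,dt$, and applying Tonelli to the nonnegative integrand $\mathbb{I}_S$, together with the first part applied at each fixed $t$, yields
\[
\mathbb{E}[m(A)] = \int_0^\infty \mathbb{P}(B'(t)\text{ exists})\,dt = 0.
\]
Since $m(A)\ge 0$, this forces $m(A)=0$ a.s., which is the desired conclusion.

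The main obstacle is justifying the Tonelli step, which requires $S$ to be product-measurable on $[0,\infty)\times\Omega$. I would handle this by expressing differentiability through a Cauchy criterion over rational increments: using path continuity, $(t,\omega)\in S$ iff the difference quotients $(B(t+h)-B(t))/h$ for $h\in\mathbb{Q}\setminus\{0\}$ satisfy the Cauchy condition, which writes $S$ as a countable union and intersection of jointly measurable sets (the map $(t,\omega)\mapsto B(t,\omega)$ being jointly measurable thanks to continuity in $t$). Once this measurability bookkeeping is in place, both parts are short; essentially all the content sits in the scaling-plus-Blumenthal argument of the first part and in verifying the joint measurability of $S$.
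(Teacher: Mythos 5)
Your proposal is correct and takes essentially the same route as the paper: reduce the pointwise statement to $t=0$ via stationarity of increments, contradict differentiability using $\limsup_{h\downarrow 0}B(h)/\sqrt{h}=+\infty$, and then deduce $m(A)=0$ a.s.\ by a Fubini--Tonelli computation exactly as in the paper's zero-set argument. If anything, you fill in two points the paper glosses over: you actually prove the $\limsup=+\infty$ claim via Blumenthal's $0$-$1$ law together with reverse Fatou (the paper asserts it without proof), and you justify product-measurability of the differentiability set through a rational Cauchy criterion, whereas the paper passes to limits along $h=1/n$ and invokes joint measurability without comment.
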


\begin{proof}
For $t = 0$, we have $\lim\sup_{t\downarrow0}\frac{B(t)}{\sqrt{t}} = +\infty$ a.s. If $B$ is differentiable at $B$, then $\exists K,\ \epsilon>0$, and we have (by the mean value theorem)
\[
|B(s)-B(0)| \le K |s-0|, s\in[0,\epsilon).
\]
A contradiction. Next, for any $t>0$, we note that $B(t)-B(t_0)$ has the same distribution as $B(t-t_0)$.  

For the second part, fix $\omega$ and define the random set
\[
C = \left\{t\ge0 : \lim_{n\rightarrow\infty}\frac{B(t+\frac{1}{n},\omega)-B(t,\omega)}{1/n}\text{ exists at $t$.}\right\}
\]
By Fubini's theorem and joint measurability of $B(t,\omega)$
\begin{align}
    \mathbb{E}\left(m(A)\right) &\le \mathbb{E}m(C) = \mathbb{E}\left(\int_0^\infty\mathbb{I}_C(t)dt\right) = \int_0^\infty\int_\Omega\mathbb{I}_C(t)d\mathbb{P}dt = 0
\end{align}
where the last equality follows from the first part.
\end{proof}

\begin{theorem}[Nondifferentiability of $B$: Paley-Wiener-Zygmund]
The Brownian motion $B(\cdot,\omega)$ is nowhere differentiable a.s.
\end{theorem}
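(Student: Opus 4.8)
The plan is to rule out differentiability everywhere by a union bound over a dyadic grid, exploiting the fact that differentiability at a single point forces several \emph{consecutive} increments to be simultaneously small — an event that is very unlikely for independent Gaussian increments. Note that the pointwise Lemma above does not suffice, since it only controls a fixed $t$ for a.e. $\omega$, whereas here the order of quantifiers is reversed. It is enough to work on $[0,1]$ and then take a countable union over the intervals $[m,m+1]$, so I fix attention on $t_0\in[0,1]$.

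First I would record the analytic consequence of differentiability. If $B(\cdot,\omega)$ is differentiable at some $t_0\in[0,1]$, then there are $M\in\mathbb{N}$ and $\delta>0$ with $|B(s)-B(t_0)|\le M|s-t_0|$ whenever $|s-t_0|\le\delta$. Fix a number $\nu$ of consecutive increments; I will take $\nu=3$. For $n$ large enough that $\nu/2^n<\delta$, let $k=k(n)$ be the index with $t_0\in[(k-1)/2^n,k/2^n)$. Writing $Y_j=B((k+j)/2^n)-B((k+j-1)/2^n)$ for $j=1,\dots,\nu$, the triangle inequality and the Lipschitz bound at $t_0$ give $|Y_j|\le CM/2^n$ for a universal constant $C$, since both endpoints of each increment lie within $\delta$ of $t_0$. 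Hence $\omega$ belongs to the event
$$A_{n,k}=\bigcap_{j=1}^{\nu}\left\{\left|B\left(\tfrac{k+j}{2^n}\right)-B\left(\tfrac{k+j-1}{2^n}\right)\right|\le\frac{CM}{2^n}\right\}.$$

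Next I would estimate $P(A_{n,k})$. The $\nu$ increments are independent and each is $\mathcal{N}(0,2^{-n})$, so after rescaling by $2^{n/2}$ each factor is $P(|\mathcal{N}(0,1)|\le CM2^{-n/2})\le CM2^{-n/2}$, using the elementary bound $P(|\mathcal{N}(0,1)|\le x)\le x$. Therefore $P(A_{n,k})\le (CM)^{\nu}2^{-\nu n/2}$, uniformly in $k$. Summing the union bound over the $2^n$ admissible values of $k$ yields
$$P\left(\bigcup_{k}A_{n,k}\right)\le 2^n(CM)^{\nu}2^{-\nu n/2}=(CM)^{\nu}2^{(1-\nu/2)n},$$
and with $\nu=3$ the exponent is $-n/2$, so this tends to $0$ as $n\to\infty$.

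Finally I would assemble the quantifiers. For fixed $M$, the set $\Omega_M$ of paths differentiable at some $t_0\in[0,1]$ with local Lipschitz constant at most $M$ is contained in $\liminf_n\bigcup_k A_{n,k}$, because the inclusion $\omega\in\bigcup_kA_{n,k}$ holds for all sufficiently large $n$; by Fatou's lemma, $P(\Omega_M)\le\liminf_n P(\bigcup_k A_{n,k})=0$. The set of paths differentiable somewhere on $[0,1]$ is $\bigcup_{M\in\mathbb{N}}\Omega_M$, a countable union of null sets and hence null, and a further countable union over the intervals $[m,m+1]$ extends the conclusion to $[0,\infty)$. The main obstacle is precisely this bookkeeping of quantifiers rather than any single estimate: differentiability supplies both a Lipschitz constant $M$ and a neighborhood radius $\delta$ that depend on $\omega$ and $t_0$, so the argument must fix $M$ (handled by the final countable union) and absorb the $\omega$-dependence of $\delta$ into the $\liminf$ over $n$, while guaranteeing that $\nu=3$ consecutive increments adjacent to $t_0$ all fall inside the neighborhood. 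The choice $\nu=3$ is essential, since $\nu=2$ leaves the exponent $1-\nu/2=0$ and fails to drive the probability to zero.
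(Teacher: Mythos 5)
Your proposal is correct and takes essentially the same route as the paper's proof: differentiability at a point forces $\nu=3$ consecutive grid increments to be small, independence of the Gaussian increments plus the density bound gives a union-bound estimate of order $n\cdot n^{-3/2}$ (your $2^n\cdot 2^{-3n/2}$), and the quantifier bookkeeping is handled by a $\liminf$ in $n$ together with countable unions over the Lipschitz constant $M$, exactly as in the paper's inclusion $D\subseteq\Gamma=\bigcup_m\liminf_n D_{mn}$. The only cosmetic difference is your dyadic grid $k/2^n$ in place of the paper's uniform grid $k/n$, which changes nothing essential.
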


\textbf{Remarks}: By the Markov property, $B(\cdot+s)-B(s)$ has the same distribution as $B(\cdot)$, so it is sufficient to show that $\mathbb{P}(B\text{ is differentiable on }(0,1]) = 0.$ Let $D_s$ be the set such that the path $B(\cdot,\omega)$ is differentiable at $s$. By the previous lemma, we know $\mathbb{P}(D_s) = 0$ for any $s$. Here we want to show that 
\[
D = \bigcup_{s\in(0,1]}D_s = \{\omega : B(s,\omega)\text{ is differentiable at some }s\in(0,1]\}
\]
has $\mathbb{P}$-measure $0$, but this is an uncountable union. {The idea here is to bound this quantity by discretizing the sample path of $B$ and each piece has a small oscillation. Then $D$ can be written as a countable union of sets. Within each discretized interval, the sample path cannot be Lipschitz continuous and hence, cannot be differentiable.}

\begin{proof}
Recall 
\[
D = \bigcup_{s\in(0,1]}D_s
\]
where $D_s$ is the set that the path $B(\cdot,\omega)$ is differentiable at $s$. Define 
\begin{align}
    \Gamma &= \bigcup_{m= 1}^\infty\liminf_{n\rightarrow \infty}\underbrace{\bigcup_{k=1}^{n-2}\bigcap_{j=k}^{k+2}\left\{\Big|B\left(\frac{j}{n}\right)-B\left(\frac{j-1}{n}\right)\Big|\le\frac{3m}{n}\right\}}_{D_{mn}}\\
    &= \bigcup_{m= 1}^\infty\bigcup_{l=1}^\infty\bigcap_{n=l}^\infty D_{mn}\nonumber
\end{align}

\textcolor{black}{If $D \subseteq \Gamma$}, then we have
\[
\mathbb{P}(D) \le \mathbb{P}(\Gamma) \le \sum_{m=1}^\infty\mathbb{P}\left(\liminf_{n\rightarrow\infty}D_{mn}\right) = \sum_{m=1}^\infty\mathbb{P}\left(\bigcup_{l=1}^\infty\bigcap_{n\ge l}^\infty D_{mn}\right)
\]
and for any fixed $m$. The proof is completed if we can show each term on the right-hand side is 0. By stationarity and independence of increments,
\begin{align*}  
\mathbb{P}\left(\bigcup_{l=1}^\infty\bigcap_{n=l}^\infty D_{mn}\right) &\le \liminf_{n\rightarrow\infty} \mathbb{P}\left(D_{mn}\right)\\
&\le \liminf_{n\rightarrow\infty} \sum_{k=1}^{n-2}\mathbb{P}\left( \Big| B\left(\frac{1}{n}\right) \Big|\le\frac{3m}{n} \right)^3
\end{align*}
From the previous slide, we have
\begin{align*}
    \mathbb{P}\left(\bigcup_{l=1}^\infty\bigcap_{n=l}^\infty D_{mn}\right) &\le \liminf_{n\rightarrow\infty} n\left[\mathbb{P}\left( \Big| B\left(1\right) \Big|\le\frac{3m}{\sqrt{n}} \right)\right]^3\\
    &\le \liminf_{n\rightarrow\infty} n\left(\int_{-\frac{3m}{\sqrt{n}}}^{\frac{3m}{\sqrt{n}}}\frac{1}{\sqrt{2\pi}}e^{-\frac{x^2}{2}}dx\right)^3\\
    &\le\liminf_{n\rightarrow\infty}\left(\frac{6m}{\sqrt{2\pi }}\right)^3\frac{1}{\sqrt{n}}\\
    &=0.
\end{align*}
We only have $D \subseteq \Gamma$ left to show.

\begin{lemma}
We have $D \subseteq \Gamma$ where $D = \bigcup_{s\in(0,1]}D_s$ and 
\[
\Gamma = \bigcup_{m= 1}^\infty\liminf_{n\rightarrow \infty} D_{mn}
\]
where 
\[
D_{mn} = \bigcup_{k=1}^{n-2}\bigcap_{j=k}^{k+2}\left\{\Big|B\left(\frac{j}{n}\right)-B\left(\frac{j-1}{n}\right)\Big|\le\frac{3m}{n}\right\}.
\]
\end{lemma}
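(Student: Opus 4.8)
The plan is to prove the inclusion \(D \subseteq \Gamma\) pathwise, that is, deterministically for each fixed \(\omega\). No probability enters at this stage, since both \(D\) and \(\Gamma\) are defined purely in terms of a single fixed continuous sample path; the probabilistic estimate has already been carried out above and only the set inclusion remains. So I would fix \(\omega \in D\) and exhibit an integer \(m\) and a threshold witnessing \(\omega \in \Gamma\).

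First I would unpack membership in \(D\): there is a point \(s = s(\omega) \in (0,1]\) at which \(t \mapsto B(t,\omega)\) is differentiable. The only consequence of differentiability I intend to use is the resulting local Lipschitz control. Taking \(\epsilon = 1\) in the definition of the derivative yields constants \(C = |B'(s)| + 1\) and \(\delta > 0\) with
\[
|B(t,\omega) - B(s,\omega)| \le C\,|t-s| \qquad \text{whenever } |t-s| \le \delta .
\]
I would then fix any integer \(m \ge 2C\), so that \(3m \ge 6C\).

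The core step is to show \(\omega \in D_{mn}\) for every sufficiently large \(n\); this is exactly what feeds the \(\liminf_n\). For that I must locate a block of three consecutive mesh subintervals that all lie within distance \(\delta\) of \(s\). Concretely I would take \(k = k(n) := \min(\lfloor ns \rfloor,\, n-2)\) and verify \(1 \le k \le n-2\) and \(s \in [\tfrac{k-1}{n}, \tfrac{k+2}{n}]\) for all large \(n\): the condition \(s>0\) forces \(\lfloor ns\rfloor \ge 1\) once \(n \ge 1/s\), while capping \(k\) at \(n-2\) handles \(s\) near \(1\), where the span becomes \([1-\tfrac{3}{n},1]\) and still contains \(s\). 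Since this span has length \(3/n \le \delta\) for \(n \ge 3/\delta\), every endpoint \(\tfrac{j}{n}, \tfrac{j-1}{n}\) with \(j \in \{k,k+1,k+2\}\) is within \(\delta\) of \(s\), and applying the Lipschitz bound twice through the point \(s\) gives
\[
\Big|B(\tfrac{j}{n}) - B(\tfrac{j-1}{n})\Big| \le C\,|\tfrac{j}{n}-s| + C\,|\tfrac{j-1}{n}-s| \le \frac{6C}{n} \le \frac{3m}{n} .
\]
Hence \(\omega \in \bigcap_{j=k}^{k+2}\{\,|B(\tfrac{j}{n})-B(\tfrac{j-1}{n})| \le \tfrac{3m}{n}\,\} \subseteq D_{mn}\).

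Finally I would assemble the conclusion: with \(N := \max(3,\, \lceil 1/s\rceil,\, \lceil 3/\delta\rceil)\) the above holds for all \(n \ge N\), so \(\omega \in \bigcap_{n \ge N} D_{mn} \subseteq \bigcup_{l}\bigcap_{n \ge l} D_{mn} = \liminf_n D_{mn} \subseteq \Gamma\). The hard part will be the index bookkeeping rather than anything analytic: confirming that a valid three-interval block \(1 \le k \le n-2\) straddling \(s\) exists for all large \(n\), in particular at the boundary \(s=1\) where only a one-sided Lipschitz estimate is available, and matching the integer \(m\) to the \(\omega\)-dependent constant \(C\). The conceptual heart is the observation that differentiability at a single instant forces three consecutive increments to be simultaneously small, which is precisely the structure the preceding probabilistic bound exploits when independence of increments turns the triple intersection into a cube of Gaussian tail probabilities.
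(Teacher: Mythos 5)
Your proof is correct and follows essentially the same route as the paper's: differentiability at $s$ yields a local Lipschitz bound, which forces the three consecutive mesh increments straddling $s$ to be $O(1/n)$ for all large $n$, hence $\omega\in\liminf_n D_{mn}\subseteq\Gamma$. If anything, your bookkeeping is more careful than the paper's: by capping $k$ at $n-2$ and taking $m\ge 2C$ you handle the boundary case $s$ near $1$ (e.g.\ $s=1$), which the paper's choice of $k$ with $\frac{k}{n}\le s\le\frac{k+1}{n}$ silently glosses over, since there $k$ would exceed $n-2$ and the $\frac{3m}{n}$ bound needs exactly the extra slack you build in.
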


\begin{proof}
If $B$ is differentiable at some $s\in(0,1]$, then it is Lipschitz continuous at $s$. Let
\[
A_{mn} = \left\{\exists s\in(0,1],\ \Big|B(t)-B(s)\Big|\le m|t-s|\ \forall t\text{ s.t. }|t-s|\le\frac{2}{n} \right\}
\]
so that clearly
$D\subseteq\bigcup_{m}\bigcup_n A_{mn}.$ We next show that for any $\omega\in D$, then $\omega\in\Gamma$.

If $\omega\in D$, then $\exists M,N$ so that $\omega\in A_{MN}$. 

Pick $k$ s.t. $\frac{k}{N}\le s\le\frac{k+1}{N}$, so $\omega\in\bigcap_{j=k}^{k+2}\left\{\Big|B(\frac{j}{N})-B(\frac{j-1}{N})\Big|\le\frac{3M}{N}\right\}$.

In other words, we can pick $N_0$ large and 
\[
\omega\in \bigcap_{N\ge N_0}\bigcup_{k=1}^{N-2}\bigcap_{j=k}^{k+2}\left\{\Big|B\left(\frac{j}{N}\right)-B\left(\frac{j-1}{N}\right)\Big|\le\frac{3M}{N}\right\} = \bigcap_{N\ge N_0}D_{MN}.
\]
We have shown that if $\omega\in D$, then $\omega \in \Gamma = \bigcup_{m}\bigcup_{l}\bigcap_{n\ge l}D_{mn}$.
\end{proof}
\end{proof}

We conclude this subsection with a precise statement of the exact modulus of continuity of the Brownian motion.

\begin{proposition}[Lévy's Modulus of Continuity] 
Define
\begin{align}
    g(\delta) &= \sqrt{2\delta\log\left(\frac{1}{\delta}\right)},\ \delta\in(0,1],
\end{align}
we have
\begin{align}
    P^0\left(\limsup_{\delta\downarrow0}\frac{1}{g(\delta)}\sup_{\substack{0\le s,t\le1\\ |t-s|<\delta}}\Big|B(t)-B(s)\Big|\right) = 1.
\end{align}
\end{proposition}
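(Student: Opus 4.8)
The plan is to establish the two matching one‑sided bounds
$\limsup_{\delta\downarrow0} M(\delta)/g(\delta)\ge 1$ and $\le 1$ with $P^0$‑probability one, where
$M(\delta)=\sup\{|B(t)-B(s)|:0\le s,t\le1,\ |t-s|<\delta\}$; together these say the $\limsup$ equals $1$ almost surely, which is the asserted statement. Throughout I would work along the dyadic grid $D_n=\{k2^{-n}:0\le k\le 2^n\}$ and lean on the Gaussian tail estimates $P(\mathcal N(0,1)>x)\le e^{-x^2/2}$ and $P(\mathcal N(0,1)>x)\ge c\,x^{-1}e^{-x^2/2}$ for large $x$, in the same spirit that the existence proof (Theorem~\ref{thm:exist}) applied Markov's inequality to Gaussian increments.

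For the lower bound, fix $\epsilon\in(0,1)$. The increments $Z_{n,k}=B((k+1)2^{-n})-B(k2^{-n})$, $0\le k<2^n$, are independent $\mathcal N(0,2^{-n})$ variables, so I would write $P(\max_k|Z_{n,k}|<(1-\epsilon)g(2^{-n}))=(1-p_n)^{2^n}$ with $p_n=P(|\mathcal N(0,2^{-n})|\ge(1-\epsilon)g(2^{-n}))$. The Gaussian lower tail gives $2^np_n\to\infty$ exponentially in $n$ (because $(1-\epsilon)^2<1$), so $(1-p_n)^{2^n}\le e^{-2^np_n}$ is summable; Borel--Cantelli then forces $\max_k|Z_{n,k}|\ge(1-\epsilon)g(2^{-n})$ for all large $n$, a.s. Since $\max_k|Z_{n,k}|\le M(2^{-n})$, this yields $\limsup_\delta M(\delta)/g(\delta)\ge 1-\epsilon$, and $\epsilon\downarrow0$ gives $\ge1$.

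The upper bound is the delicate half and the main obstacle, since naive chaining overshoots: bounding $|B(t)-B(s)|$ by a sum of dyadic increments over all finer scales costs a constant strictly larger than $1$ (the series $\sum_{m>n}g(2^{-m})$ is itself $\Theta(g(2^{-n}))$, and crudely spanning two adjacent intervals costs a further factor $2$). To recover the sharp constant I would proceed in two stages. First, via a union bound over pairs, I would show that for fixed $\epsilon>0$ and a suitably small $\beta=\beta(\epsilon)>0$, almost surely for all large $n$ every dyadic pair with $0\le j<k\le2^n$ and $k-j\le 2^{\beta n}$ satisfies $|B(k2^{-n})-B(j2^{-n})|\le(1+\epsilon)g((k-j)2^{-n})$; the count here is $\asymp 2^{\beta n(1+(1+\epsilon)^2)}\,2^{-n((1+\epsilon)^2-1)}$, which is summable once $\beta$ is small, so Borel--Cantelli applies. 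Second, given $s<t$ with $t-s<\delta$, I would snap to the fine grid of level $N\asymp \log(1/\delta)/((1-\beta)\log2)$, so that $2^{-N}\approx\delta^{1/(1-\beta)}\ll\delta$; writing $s'=\lceil s2^N\rceil2^{-N}$ and $t'=\lfloor t2^N\rfloor2^{-N}$, the block $[s',t']$ is a \emph{single} dyadic increment covering at most $2^{\beta N}$ subintervals, so the first stage bounds $|B(t')-B(s')|$ by $(1+\epsilon)g(t'-s')\le(1+\epsilon)g(\delta)$ with no spurious factor of $2$. The two endpoint corrections $|B(s')-B(s)|$ and $|B(t)-B(t')|$ run over distance $\le 2^{-N}$ and are controlled by ordinary chaining at scales finer than $N$, contributing at most $C(1+\epsilon)g(2^{-N})=o(g(\delta))$ since $g(\delta^{1/(1-\beta)})/g(\delta)\to0$.

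Combining the three pieces gives $|B(t)-B(s)|\le(1+\epsilon)g(\delta)+o(g(\delta))$ uniformly over $|t-s|<\delta$, hence $\limsup_\delta M(\delta)/g(\delta)\le1+\epsilon$, and $\epsilon\downarrow0$ gives $\le1$. The crux is the dyadic‑pair estimate of the first stage: it is exactly the device that lets a long increment be treated in one shot instead of being accumulated across scales, and the balance I would have to get right is choosing $\beta$ small enough to keep the union bound summable yet large enough that the fine grid of level $N$ still reaches distance $\delta$ in a single admissible span.
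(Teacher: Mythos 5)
Your proposal is correct, but it cannot be compared line-by-line with the paper, because the paper offers no proof at all: it states the proposition and defers to page 114 of Karatzas--Shreve and Theorem 7.13 of \c{C}{\i}nlar. What you have written is essentially the classical self-contained argument (the one in M\"orters--Peres, and close in spirit to the Karatzas--Shreve proof the paper cites): the lower bound via independent dyadic increments, the estimate $(1-p_n)^{2^n}\le e^{-2^np_n}$ with $2^np_n\to\infty$ exponentially because $(1-\epsilon)^2<1$, and Borel--Cantelli; the upper bound via a union bound over all dyadic pairs at level $n$ spanning at most $2^{\beta n}$ subintervals, followed by snapping an arbitrary pair $s<t$ to a finer grid so that $[s',t']$ is covered by a \emph{single} admissible span, with endpoint corrections absorbed by ordinary chaining since $g(\delta^{1/(1-\beta)})=o(g(\delta))$. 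Your exponent bookkeeping is right: the pair count times the tail bound gives $2^{n(1+\beta-(1-\beta)(1+\epsilon)^2)}$, summable for $\beta$ small relative to $\epsilon$, and the choice $2^{-N}\approx\delta^{1/(1-\beta)}$ makes $\delta 2^N\approx 2^{\beta N}$, exactly the admissible length. You correctly identify why naive chaining fails to give the constant $1$, which is the whole point of the two-stage device. Two pieces of housekeeping you should make explicit in a full write-up: the dyadic increments have $|t-s|$ exactly $2^{-n}$ while $M(\delta)$ uses the strict inequality $|t-s|<\delta$, so pass through a sequence such as $\delta_n=2^{-n}(1+1/n)$ (for which $g(\delta_n)/g(2^{-n})\to1$) rather than $\delta_n=2^{-n+1}$, which would cost a factor $1/\sqrt{2}$; and you use monotonicity of $g$ on $(0,e^{-1})$ when bounding $g(t'-s')\le g(\delta)$. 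Note also that the paper's displayed statement is malformed --- it omits ``$=1$'' inside the probability, which should read $P^0(\limsup_{\delta\downarrow0}\,(1/g(\delta))\sup|B(t)-B(s)|=1)=1$ --- and your proof correctly targets this intended statement by establishing the matching bounds $\ge 1-\epsilon$ and $\le 1+\epsilon$ almost surely.
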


For a proof, see page 114 in \cite{karatzas2014brownian} or theorem 7.13 in \cite{ccinlar2011probability}.

\subsection{Reflection Principle}

This is another application of the strong Markov property.

\begin{theorem}[Reflection Principle]
If $a>0$ and $b<a$, then for any $t>0$,
\begin{align}
    P^0\left(M(t)>a,B(t)<b\right) = P^0\left(B(t)>2a-b\right),
\end{align}
where 
\[
M(t) = \max_{0\le s\le t}B(s)
\]
is the running maximum of $B$ up to time $t$.
\end{theorem}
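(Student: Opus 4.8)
The plan is to exploit the Strong Markov Property together with the symmetry of Brownian increments, in the classical strategy of reflecting the path after it first reaches level $a$. First I would introduce the first passage time to level $a$,
\[
\tau_a = \inf\{s \ge 0 : B(s) = a\},
\]
which is a stopping time (being the hitting time of a closed set by a path-continuous process, handled exactly as $T_a$ was in the Zero Set proof). Because $B$ has continuous paths and $a > 0 = B(0)$, path continuity yields the crucial identification
\[
\{M(t) > a\} = \{\tau_a < t\},
\]
so the left-hand probability becomes $P^0(\tau_a < t,\ B(t) < b)$.

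Next I would apply the Strong Markov Property at $\tau_a$: on $\{\tau_a < \infty\}$ the shifted process $\tilde{B}(s) = B(\tau_a + s) - B(\tau_a) = B(\tau_a + s) - a$ is a standard Brownian motion independent of $\mathcal{F}_{\tau_a}$ with $\tilde{B}(0) = 0$. The decisive symmetry is that $\tilde{B}$ and $-\tilde{B}$ have the same law. Conditioning on $\mathcal{F}_{\tau_a}$ and on $\{\tau_a < t\}$, the increment $B(t) - a = \tilde{B}(t - \tau_a)$ is symmetric about $0$, so reflection sends $\{B(t) < b\} = \{\tilde{B}(t-\tau_a) < b - a\}$ to $\{\tilde{B}(t - \tau_a) > a - b\} = \{B(t) > 2a - b\}$ with equal conditional probability, giving
\[
P^0\big(\tau_a < t,\ B(t) < b\big) = P^0\big(\tau_a < t,\ B(t) > 2a - b\big).
\]

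Finally I would discard the now-redundant constraint. Since $b < a$ we have $2a - b > a$, so $\{B(t) > 2a - b\} \subseteq \{M(t) > a\} = \{\tau_a < t\}$: if the path exceeds $2a - b > a$ at time $t$, then by continuity it must already have crossed $a$. Therefore $P^0(\tau_a < t,\ B(t) > 2a - b) = P^0(B(t) > 2a - b)$, which is exactly the claimed identity.

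The main obstacle is making the reflection step fully rigorous, since the symmetrization is carried out conditionally on the \emph{random} time $\tau_a$ rather than at a fixed time. The cleanest route is to establish the identity for the joint law of $(\tau_a, \tilde{B})$ and then integrate, invoking precisely the independence of $\tilde{B}$ from $\mathcal{F}_{\tau_a}$ supplied by the Strong Markov Property; one takes $Y_s(\omega)$ in that theorem to encode the indicator of $\{B(\tau_a + (t-\tau_a)) < b\}$. One also uses the mild fact that $B(t) \sim \mathcal{N}(0,t)$ has a continuous law, so the boundary values $B(t) = b$ and $B(t) = 2a - b$ carry no mass and strict versus non-strict inequalities may be interchanged freely.
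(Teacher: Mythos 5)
Your proposal is correct and follows essentially the same route as the paper: both apply the Strong Markov Property at the first passage time $\tau_a$, use the symmetry of the post-$\tau_a$ increments to trade $\{B(t)<b\}$ for $\{B(t)>2a-b\}$, and then drop the constraint $\{\tau_a<t\}$ via the inclusion $\{B(t)>2a-b\}\subseteq\{\tau_a<t\}$ (the paper merely packages the symmetry step as the identity $\mathbb{E}^a Y_\tau=0$ for the difference-of-indicators functional $Y_s$, the very formalization you sketch in your final paragraph). Your closing remarks on rigor---handling the random time through the jointly measurable $Y_s$ and the null boundary events---match the paper's execution, so no gap remains.
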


\begin{proof}
Define 
\begin{align*}
    &\tau = \inf\{s\ge 0 : B(s)=a\},\\
    &Y_s(\omega) = \mathbb{I}(s<t,B(t-s)>2a-b)-\mathbb{I}(s<t,B(t-s)<b).
\end{align*}
For $s<t$,
\[
\mathbb{E}^x Y_s = P^x(B(t-s)>2a-b)-P^x(B(t-s)<b).
\]
By symmetry, on the set $\{\tau<\infty\}$, we have
\[
\mathbb{E}^{B(\tau)} Y_\tau = \mathbb{E}^a Y_\tau = \mathbb{E}^a Y_s |_{s=\tau} = 0.
\]

By the strong Markov property, on the set $\{\tau<\infty\}$
\[
\mathbb{E}^0\left(Y_\tau\circ\theta_\tau\ \mathbb{I}(\tau<\infty)\right) = \mathbb{E}^0\left(\underbrace{\mathbb{E}^a\left(Y_\tau\circ\theta_\tau\ |\mathcal{F}_\tau\right)}_{\mathbb{E}^{B(\tau)}Y_\tau}\mathbb{I}(\tau<\infty)\right) = 0.
\]

On the other hand, on the set $\{\tau<\infty\}$

\[
 Y_{\tau(\omega)}(\theta_{\tau(\omega)}\omega) = \begin{cases}
    +1 & \text{if }\tau(\omega)<t,\ B(t)>2a-b;\\
    -1 & \text{if }\tau(\omega)<t,\ B(t)<b;\\
    0 & \text{if else}.
\end{cases}
\]

Hence,
\[
\mathbb{E}^0\left(Y_\tau\circ\theta_\tau\ \mathbb{I}(\tau<\infty)\right) = P^0(\tau<t,B(t)>2a-b)-P^0(\tau<t,B(t)<b).
\]
The event $\{\tau<t\}$ iff $\{M(t)>a\}$ by the path continuity of $B$; and we have $\{B(t)>2a-b\}\subseteq\{\tau<t\}$. In conclusion,
\[
P^0(B(t)>2a-b) = P^0(M(t)>a,B(t)<b).
\]
\end{proof}

\textbf{Remarks}: A more useful formula is derived as $P^0(|B(t)|>a) = P^0(M(t)>a)$, see next corollary.

\begin{corollary}
Based on the reflection principle, we have
\begin{itemize}
    \item Relative to $P^0$, $M(t)$, $|B(t)|$ and $M(t)$ have the same distribution.
    \item Let $\tau_a = \inf\{t>0 : B(t)=a\}$ be the first hitting time of $a$, then
    \begin{align}
        P^0(\tau_a\le T) &= 2\left(1-\Phi\left(\frac{a}{\sqrt{T}}\right)\right).
    \end{align}
\end{itemize}
\end{corollary}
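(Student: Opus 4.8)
The plan is to reduce both assertions to the single identity $P^0(M(t)>a)=2P^0(B(t)>a)$ for every $a>0$, which follows from the Reflection Principle just proved. First I would decompose the event according to the endpoint, writing, up to the $P^0$-null set $\{B(t)=a\}$,
$$\{M(t)>a\} = \{M(t)>a,\ B(t)<a\}\ \cup\ \{M(t)>a,\ B(t)>a\}.$$
On the second piece the definition of the running maximum gives the inclusion $\{B(t)>a\}\subseteq\{M(t)>a\}$, hence $P^0(M(t)>a,\ B(t)>a)=P^0(B(t)>a)$. On the first piece I would apply the Reflection Principle and let the barrier $b\uparrow a$: since $P^0(M(t)>a,\ B(t)<b)=P^0(B(t)>2a-b)$ for every $b<a$, continuity of measure together with continuity of the Gaussian distribution function yields $P^0(M(t)>a,\ B(t)<a)=P^0(B(t)>a)$. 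Adding the two contributions gives the claimed identity.

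For the first bullet I would then compare with the absolute value. Symmetry of the centered Gaussian law gives $P^0(|B(t)|>a)=P^0(B(t)>a)+P^0(B(t)<-a)=2P^0(B(t)>a)$, which is exactly the quantity just computed for $P^0(M(t)>a)$. As this holds for all $a>0$ and both $M(t)$ and $|B(t)|$ are nonnegative (note $M(t)\ge B(0)=0$ under $P^0$), their survival functions agree and hence $M(t)\overset{d}{=}|B(t)|$ under $P^0$.

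For the second bullet I would use path continuity to identify $\{\tau_a\le T\}=\{M(T)\ge a\}$: the level $a$ is first reached by time $T$ precisely when the continuous path has already attained $a$, equivalently when $M(T)\ge a$. Because the survival function $a\mapsto P^0(M(T)>a)=2P^0(B(T)>a)$ is continuous, $M(T)$ has no atoms, so $P^0(M(T)\ge a)=P^0(M(T)>a)=2P^0(B(T)>a)$. Standardizing via $B(T)=\sqrt{T}\,Z$ with $Z\sim\mathcal{N}(0,1)$ then gives $2P^0(Z>a/\sqrt{T})=2\bigl(1-\Phi(a/\sqrt{T})\bigr)$, as claimed.

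All of these steps are short; the only points demanding care are the measure-zero adjustments that let me pass from $\ge$ to $>$ (valid because the Gaussian, and hence the running-maximum, laws are atomless) and the limiting argument $b\uparrow a$ that lets me invoke the strict-inequality Reflection Principle at the boundary value $b=a$. I expect no genuine obstacle, since the substantive work—trading the two-sided constraint $\{M(t)>a,\ B(t)<b\}$ for the one-sided tail $\{B(t)>2a-b\}$—has already been accomplished in the Reflection Principle.
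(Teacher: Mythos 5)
Your proof is correct and follows essentially the same route as the paper's: the same decomposition of $\{M(t)>a\}$ according to whether $B(t)<a$ or $B(t)\ge a$, the same limit $b\uparrow a$ in the reflection principle, and the same identification $\{\tau_a\le T\}=\{M(T)\ge a\}$ via path continuity. You merely spell out the measure-theoretic details (continuity of measure along $b\uparrow a$, atomlessness justifying the passage from $\ge$ to $>$, and the Gaussian symmetry giving $P^0(|B(t)|>a)=2P^0(B(t)>a)$) that the paper's terse proof leaves implicit.
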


\begin{proof}
\begin{itemize}
    \item Let $a>0$ and $b\uparrow a$,
    \begin{align*}
        P^0(M(t)>a) &= P^0(M(t)>a,B(t)< a)+P^0(M(t)>a,B(t)\ge a)\\
        &= 2P^0(B(t)>a).
    \end{align*}
    \item Since $\{\tau_a\le T\}=\{M(T)\ge a\}$, 
    \begin{align*}
       P^0(\tau_a\le T) &= P^0(M(T)\ge a).
    \end{align*}
\end{itemize}
\end{proof}

\subsection{Local Time}

The \textbf{occupation time} (or the occupation measure) of a Borel set $A\subseteq \mathbb{R}^1$ by Brownian motion up to time $t$ is
\begin{align*}
    K_t(\omega, A) = \int_0^t\mathbb{I}_A(B(s,\omega))ds = m(\{s\in[0,t] : B(s,\omega)\in A\}),
\end{align*}
which is a random measure of total mass $t$. It is absolutely continuous w.r.t. the Lebesgue measure and its Radon-Nikodym derivative is called the \textbf{local time} \citep{liggett2010continuous}. The Radon-Nikodym derivative is jointly continuous in space and time a.s. and can be conceptually thought/defined as
\begin{align}\label{eq:local_time}
    L_t(a,\omega) = \lim_{\epsilon\downarrow 0}\frac{K_t(\omega, (a-\epsilon,a+\epsilon))}{2\epsilon}.
\end{align}
and the proof of existence is non-trivial.

In other words, the local time $L_t(a,\omega)$ serves as a density w.r.t. $m(\cdot)$, the Lebesgue measure for occupation time \citep{morters2010brownian, karatzas2014brownian}, i.e.,
\[
K_t(\omega, A) = \int_0^t\mathbb{I}_A(B(s,\omega))ds = \int_A dx L_t(x,\omega),\ t\in\mathbb{R}_+,A\in\mathcal{B}(\mathbb{R}).
\]
\textbf{Remarks}: As mentioned on page 203 in \cite{karatzas2014brownian}, there is no universal agreement in the literature whether we should divide the denominator by $2\epsilon$ or $4\epsilon$ in Equation~\ref{eq:local_time}. Here I divide it by $2\epsilon$ (\cite{ccinlar2011probability} did it also), but \cite{karatzas2014brownian,ikeda1984introduction} used $4\epsilon$.

\begin{figure}[H]
    \centering
    \includegraphics[width=\textwidth]{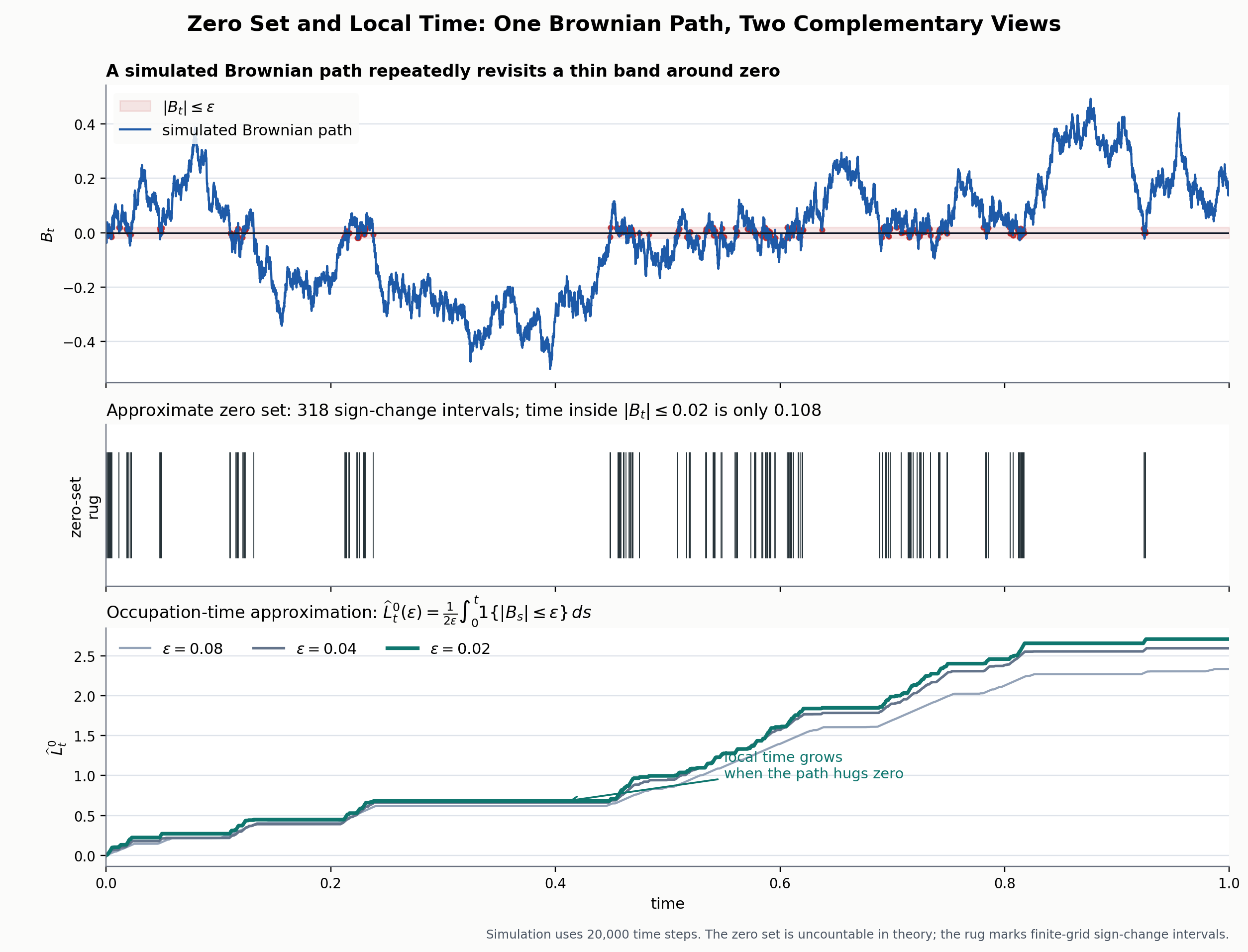}
    \caption{Simulation view of the Brownian zero set and local time at zero. The top panel shows one Brownian path with a thin \(\epsilon\)-tube around zero. The middle panel marks finite-grid sign-change intervals as a rug plot, a computable shadow of the zero set. The bottom panel shows occupation-time approximations \(\widehat L_t^0(\epsilon)=(2\epsilon)^{-1}\int_0^t\mathbb{I}\{|B_s|\le\epsilon\}\,ds\) for three bandwidths. Local time grows during repeated near-zero visits even though the exact zero set has Lebesgue measure zero.}
    \label{fig:zero-local-time-sim}
\end{figure}

\begin{theorem}[Tanaka's Formula for Brownian Local Time] 
For every $a\in\mathbb{R}^1$ there is an increasing continuous stochastic process $L_a(t)$ satisfying
\begin{align}
    |B(t)-a|-|a| = \int_0^t\text{sgn}\left( B(s)-a \right)dB(s) + L_a(t).
\end{align}
The measure $dL_a$ puts all its mass on $\{t : B(t) = a\}$ and $L_a(t)$ is jointly continuous in $(a,t)$.
\end{theorem}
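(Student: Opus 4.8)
The plan is to obtain the identity by smoothing the non-smooth function $f(x) = |x-a|$, applying Itô's formula to the smoothed version, and then identifying the local time as the limit of the second-order (bounded-variation) term. First I would fix a mollifier $\rho_\epsilon \ge 0$ supported in $(-\epsilon,\epsilon)$ with unit integral and set $f_\epsilon = f * \rho_\epsilon$. Then $f_\epsilon$ is smooth and convex (a convolution of a convex function with a nonnegative kernel), $f_\epsilon \to f$ uniformly, $f_\epsilon'$ is bounded by $1$ and converges pointwise to $\text{sgn}(\cdot - a)$ away from $a$, and $f_\epsilon''(x) = 2\rho_\epsilon(x-a) \ge 0$. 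Applying Itô's formula to $f_\epsilon(B(t))$ under $P^0$ gives
\[
f_\epsilon(B(t)) - f_\epsilon(0) = \int_0^t f_\epsilon'(B(s))\,dB(s) + \frac{1}{2}\int_0^t f_\epsilon''(B(s))\,ds.
\]

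Next I would pass to the limit $\epsilon \downarrow 0$ term by term. The left-hand side converges to $|B(t)-a| - |a|$ by uniform convergence of $f_\epsilon$ (recall $B(0)=0$ under $P^0$, so $f_\epsilon(0) \to |a|$). For the stochastic integral, Itô's isometry reduces matters to showing $\E\int_0^t (f_\epsilon'(B(s)) - \text{sgn}(B(s)-a))^2\,ds \to 0$; since the integrand is uniformly bounded and vanishes for a.e.\ $s$ (the random time set $\{s : B(s)=a\}$ has Lebesgue measure zero a.s., exactly as established for the zero set and its level-set analogue), dominated convergence yields $L^2$-convergence of the integral to $\int_0^t \text{sgn}(B(s)-a)\,dB(s)$. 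Consequently the remaining term $\frac{1}{2}\int_0^t f_\epsilon''(B(s))\,ds = \int_0^t \rho_\epsilon(B(s)-a)\,ds$ must also converge, and I would \emph{define} $L_a(t)$ to be this limit, which instantly produces the claimed identity.

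Having defined $L_a(t)$ as the limit, the elementary pathwise properties are read off directly. Monotonicity and continuity in $t$ follow because each approximant $\int_0^t \rho_\epsilon(B(s)-a)\,ds$ is nondecreasing (nonnegative integrand) and continuous in $t$, and both properties survive in the limit once the convergence is shown to be locally uniform in $t$. That the measure $dL_a$ is carried by $\{t : B(t)=a\}$ follows from the localization of the kernel: for any $\delta > 0$ and $\epsilon < \delta$, the integrand $\rho_\epsilon(B(s)-a)$ vanishes at every time $s$ with $|B(s)-a| > \delta$, so no mass is placed on such times in the limit. This is also consistent with the occupation-density description of $L_a$ in Equation~\ref{eq:local_time}.

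The main obstacle is the joint continuity in $(a,t)$, which does not come for free from the one-parameter construction above, since $L_a(t)$ was built separately for each fixed $a$. For this I would establish moment bounds on the increments of the two-parameter field $(a,t) \mapsto L_a(t)$—for instance estimates of the form $\E|L_a(t) - L_b(t)|^{2p} \le C_p\,|a-b|^{p}$ together with an analogous Hölder bound in the time variable—and then invoke a multiparameter version of the Kolmogorov continuity criterion to produce a jointly continuous modification. Deriving these moment estimates, typically through the approximating integrals and a careful application of the strong Markov property at the hitting time of $a$, is the technically demanding step; by contrast, the existence of the formula and its monotonicity, continuity in $t$, and support properties are comparatively routine once the Itô approximation is in place.
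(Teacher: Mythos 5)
The paper does not actually prove this theorem: Tanaka's formula is stated without proof in the Local Time subsection, with the surrounding discussion deferring to the cited references (Liggett; Karatzas and Shreve; M\"orters and Peres). So there is no in-paper argument to compare against; what you have written is the standard mollification proof found in those references, and in outline it is correct. Your key computations check out: $f_\epsilon = |\cdot - a| * \rho_\epsilon$ is smooth and convex with $f_\epsilon'' (x)= 2\rho_\epsilon(x-a)$ (since $f'' = 2\delta_a$ distributionally), It\^o's formula applies, the left side converges by the uniform bound $|f_\epsilon - f| \le \epsilon$, and the $L^2$ convergence of the stochastic integrals via It\^o's isometry is justified exactly as you say, because the level set $\{s : B(s) = a\}$ has Lebesgue measure zero a.s.\ --- which is the paper's Zero Set theorem together with its remark extending the result to level sets, so you are correctly reusing machinery the paper has already established. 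Two refinements would tighten the write-up. First, you do not need locally uniform convergence of the approximants to get continuity and monotonicity in $t$: it is cleaner to \emph{define} $L_a(t) = |B(t)-a| - |a| - \int_0^t \mathrm{sgn}(B(s)-a)\,dB(s)$, which is continuous in $t$ automatically as a difference of continuous processes (taking the continuous version of the stochastic integral), and then deduce monotonicity by noting $L_a(t) - L_a(s) = \lim_\epsilon \int_s^t \rho_\epsilon(B(u)-a)\,du \ge 0$ in probability for each rational pair $s < t$, passing to an a.s.\ subsequence, and using countability plus continuity. Second, for the joint continuity your claimed estimate $\E|L_a(t)-L_b(t)|^{2p} \le C_p |a-b|^p$ is the right one, but the natural route is the Burkholder--Davis--Gundy inequality applied to the difference of the two Tanaka identities, which reduces the bound to moments of the occupation time of the interval between $a$ and $b$ (of order $|a-b|^p$ for the $p$-th moment, by iterating the Markov property), rather than the strong Markov property at hitting times; either way, Kolmogorov's multiparameter criterion then yields the jointly continuous modification, and you have correctly identified this as the only genuinely technical step.
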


\subsection{Lévy's Martingale Characterization}

\begin{definition}[Predictable Variation \citep{dabrowska2020stochastic}]
If $M(t)$ is a locally square integrable martingale then $M^2(t)$ is a local sub-martingale. By Doob-Meyer's decomposition, there exists a predictable process $\langle M \rangle(t)$ s.t. $M^2(t)-\langle M \rangle(t)$ is a local martingale. The process $\langle M \rangle(t)$ is referred to as the \textbf{predictable variation} of $M(t)$ and can be obtained as the in probability limit of sums
\[
\sum_{i=1}^n\mathbb{E}\left(M^2(t_i)-M^2(t_{i-1})|\mathcal{F}_{t_i-1}\right) = \sum_{i=1}^n\text{Var}\left( M(t_i)-M(t_{i-1})|\mathcal{F}_{t_i-1} \right),
\]
where $0=t_0<t_1<\cdots<t_k=t$ is a partition with mesh $\max_i|t_i-t_{i-1}|\rightarrow 0$.
\end{definition}

\begin{theorem}[Lévy's Characterization of $B$ \citep{ccinlar2011probability}]
If $M(t)$ is a continuous square integrable martingale w.r.t. $\mathbb{F}$ with $M(0)=0$, and $\langle M\rangle(t) = t$, then $M(t)$ is a standard Brownian motion w.r.t. $\mathbb{F}$.   
\end{theorem}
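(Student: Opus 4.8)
The plan is to identify $M$ as a standard Brownian motion by computing the conditional characteristic function of its increments and matching it to that of the appropriate Gaussian, thereby verifying property (a) in the definition of standard Brownian motion. The engine of the argument is It\^o's formula applied to a complex exponential, with the hypothesis $\langle M\rangle(t)=t$ entering decisively to force the right variance.

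First I would fix $\xi\in\mathbb{R}$ and apply It\^o's formula to $f(x)=e^{i\xi x}$ (handling the real and imaginary parts separately, or invoking the complex-valued version). Since $f''(x)=-\xi^2 e^{i\xi x}$ and $d\langle M\rangle(s)=ds$, this gives
$$e^{i\xi M(t)} = 1 + i\xi\int_0^t e^{i\xi M(s)}\,dM(s) - \frac{\xi^2}{2}\int_0^t e^{i\xi M(s)}\,ds.$$
Next I would introduce the candidate process $Z(t)=\exp\left(i\xi M(t)+\tfrac{1}{2}\xi^2 t\right)$ and differentiate via the It\^o/product rule. The two $dt$ contributions — one from the quadratic-variation term above and one from differentiating $e^{\frac12\xi^2 t}$ — cancel exactly, leaving
$$dZ(t) = i\xi\, e^{\frac{1}{2}\xi^2 t}\, e^{i\xi M(t)}\,dM(t),$$
a stochastic integral against $M$. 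Because the integrand has modulus $|\xi| e^{\frac12\xi^2 s}\le |\xi| e^{\frac12\xi^2 t}$ on $[0,t]$ and $\langle M\rangle(s)=s$ is finite, its $L^2$ norm is controlled, so $Z$ is a genuine martingale rather than merely a local one.

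From $\mathbb{E}[Z(t)\mid\mathcal{F}_s]=Z(s)$, rearranging yields
$$\mathbb{E}\!\left[e^{i\xi(M(t)-M(s))}\,\middle|\,\mathcal{F}_s\right] = e^{-\frac{1}{2}\xi^2(t-s)}.$$
Since the right-hand side is deterministic, this single identity does double duty: it shows that $M(t)-M(s)$ is independent of $\mathcal{F}_s$ (so $M$ has independent increments) and that its characteristic function is that of $\mathcal{N}(0,t-s)$; the dependence on $t-s$ alone also gives stationarity. Combined with the continuity of paths assumed in the hypothesis and $M(0)=0$, this verifies property (a), so $M$ is a standard Brownian motion.

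I expect the main obstacle to be the rigorous justification of the It\^o-calculus step — confirming that $Z$ is a true martingale by controlling the integrability of the complex-valued stochastic integral on bounded intervals, and cleanly splitting the complex exponential into its real and imaginary components. The algebraic cancellation of the $dt$ terms is the crux that pins down the exponent $\tfrac12\xi^2 t$, and it rests entirely on $\langle M\rangle(t)=t$; any other quadratic variation would leave a residual drift and the martingale property would fail.
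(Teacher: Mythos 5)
The paper does not actually prove this theorem --- it only states it with a citation to \cite{ccinlar2011probability} and a remark that $\langle M\rangle(t)=t$ is equivalent to $M^2(t)-t$ being a martingale --- so there is no in-paper argument to compare against. Your proposal is the standard exponential-martingale proof (as in Karatzas--Shreve or \cite{ccinlar2011probability}), and it is essentially correct: the It\^o computation, the exact cancellation of the $dt$ terms in $Z(t)=\exp\bigl(i\xi M(t)+\tfrac12\xi^2 t\bigr)$, and the $L^2$ bound making $Z$ a true martingale are all right, and this is precisely where the hypothesis $\langle M\rangle(t)=t$ does its work, as you say.

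Two points deserve slightly more care than your sketch gives them. First, the identity $\mathbb{E}\bigl[e^{i\xi(M(t)-M(s))}\,\big|\,\mathcal{F}_s\bigr]=e^{-\frac12\xi^2(t-s)}$ holds a.s.\ for each \emph{fixed} $\xi$, with the exceptional null set depending on $\xi$; to conclude independence of the increment from $\mathcal{F}_s$ you need the identity for all $\xi$ simultaneously, which you get by restricting to rational $\xi$ and using continuity in $\xi$ of both sides (dominated convergence for the conditional expectation). Moreover, independence of a single increment from $\mathcal{F}_s$ for each pair $s<t$ must be iterated over consecutive increments to obtain the \emph{joint} independence of increments over disjoint intervals demanded by property (a) of the definition; this is routine but should be stated. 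Second, your argument implicitly uses It\^o's formula for continuous local martingales (not just for Brownian motion), so there is no circularity only if the stochastic calculus has been developed at that level of generality --- worth a sentence, since $M$ is the object being identified as Brownian motion. Neither point is a genuine gap; both are standard finishing moves of the canonical proof.
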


\textbf{Remarks}: By Doob-Meyer, the statement $\langle M\rangle(t) = t$ is equivalent to say that $M^2(t)-t$ is again a martingale w.r.t. $\mathbb{F}$.

\subsection{SDEs and the Black-Merton-Scholes Template}

A stochastic differential equation (SDE) turns Brownian motion from a limiting object into a modeling engine. The basic form is
\[
dX_t=b(t,X_t)dt+\sigma(t,X_t)dB_t,
\]
where \(b\) is the drift and \(\sigma\) is the diffusion coefficient. In biostatistics, \(X_t\) might represent a latent biomarker, a disease severity index, a time-varying frailty term, or a hidden physiological state observed through noisy clinic visits. The notation is compact, but the meaning is concrete: over a short interval \(dt\), the process moves by a systematic component plus a random shock whose variance is proportional to \(dt\).

The canonical solved example is geometric Brownian motion,
\[
dS_t=\mu S_tdt+\sigma S_tdB_t,\qquad S_0>0.
\]
By Itô's formula applied to \(\log S_t\),
\[
d\log S_t=\left(\mu-\frac{\sigma^2}{2}\right)dt+\sigma dB_t,
\]
and hence
\[
S_t=S_0\exp\left\{\left(\mu-\frac{\sigma^2}{2}\right)t+\sigma B_t\right\}.
\]
This is the diffusion backbone of the Black-Merton-Scholes model \citep{black1973pricing,merton1973theory}. Its purpose here is not to turn the manuscript into finance. It is included because every applied statistician who learns SDEs should see one model where Brownian motion, Itô correction, likelihood intuition, and closed-form consequences all appear in one place.

For biomedical data, the same grammar is usually modified rather than copied. A positive biomarker might use a log-diffusion, a bounded physiological score might use a transformed diffusion, and a frailty process might use mean reversion:
\[
dU_i(t)=-\rho U_i(t)dt+\eta dB_i(t).
\]
The Black-Merton-Scholes example is therefore a template: first understand how Brownian noise enters a multiplicative SDE, then replace the scientific object, drift, boundary, and observation model with ones appropriate for health data.

\subsection{Donsker's Theorem}

The narrative now turns from single Brownian paths to data. Donsker's
theorem is the bridge between probability theory and biostatistical
diagnostics: it says that many centered cumulative sums, after the right
scaling, forget their original units and approach a Brownian limit. Once an
endpoint correction is imposed by centering, the Brownian bridge appears.
This is why calibration curves, cumulative score processes, sequential
trial statistics, and survival residual processes can often be interpreted
with the same geometric reference.

In non-parametric statistics, one of the most important distributions is the empirical distribution
\[
F_n(t) = \frac{1}{n}\sum_{i=1}^n\mathbb{I}(X_i\le t),
\]
where $X_i$'s are i.i.d. random variables with the same distribution $F$. By SLLN, $F_n(t)$ converges to $F(t)$ a.s. and the convergence is strengthened to uniform by Glivenko-Cantelli. By the classical CLT, $\sqrt{n}\left(F_n(t)-F(t)\right)$ converges weakly to $\mathcal{N}(0,F(t)(1-F(t)))$. 

Now if we regard $F_n(\cdot)$ as a stochastic process indexed by $t\in\mathbb{R}$, what can we say about the limiting distribution of 
$\{\sqrt{n}(F_n(t)-F(t)) : t\in\mathbb{R}\}$? Moreover, let $\mathcal{H}$ be a class of measurable functions, then what can we say about the distribution of
\[
\mathbb{P}_n h = \frac{1}{n}\sum_{i=1}^n h(X_i),\ h\in\mathcal{H}\ ?
\]
This is the subject of the empirical process theory. Here we focus on the special case $h_t(X_i) = \mathbb{I}(X_i\le t)$.

\begin{theorem}[Donsker's Theorem for Empirical Distribution \citep{donsker1952justification,billingsley2013convergence}]
Let $X_i$'s be i.i.d. variables with a common distribution $F$. Define the process 
\[
W_{n}^F(t) = \sqrt{n}\left( F_n(t) - F(t) \right),\ t\in\mathbb{R}.
\]
Then we have
\[
W_n^F(\cdot) \Rightarrow W(F(\cdot))\text{ in }D(\mathbb{R}),
\]
where $W(x) = B(x) - xB(1),\ x\in[0,1]$ is the Brownian Bridge (BB) process.
\end{theorem}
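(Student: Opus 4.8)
The plan is to prove weak convergence in $D(\mathbb{R})$ by first reducing to the \emph{uniform} empirical process on $[0,1]$, and then establishing the two classical ingredients of weak convergence in the Skorokhod space: convergence of the finite-dimensional distributions and tightness. First I would reduce to the uniform case via the probability integral transform. Writing $F^{-1}$ for the generalized inverse, if $U_1,\dots,U_n$ are i.i.d. $\mathrm{Uniform}[0,1]$ then $X_i:=F^{-1}(U_i)$ has law $F$ and $\mathbb{I}(X_i\le t)=\mathbb{I}(U_i\le F(t))$. Hence, with the uniform empirical process
\[
W_n^U(u)=\sqrt{n}\left(\frac{1}{n}\sum_{i=1}^n \mathbb{I}(U_i\le u)-u\right),\qquad u\in[0,1],
\]
we have the exact identity $W_n^F(t)=W_n^U(F(t))$. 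Thus it suffices to show $W_n^U\Rightarrow W$ in $D[0,1]$, where $W$ is the Brownian bridge, and then push this through the composition map $z\mapsto z\circ F$. This map $D[0,1]\to D(\mathbb{R})$ is continuous at every continuous $z$ (uniform convergence of $z_n\to z$ forces $z_n\circ F\to z\circ F$ uniformly), and since the limit $W$ has continuous paths almost surely, the continuous mapping theorem yields $W_n^U\circ F\Rightarrow W\circ F$ in $D(\mathbb{R})$.

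For the finite-dimensional distributions, fix $0\le u_1<\cdots<u_k\le 1$ and observe that $W_n^U(u_j)=n^{-1/2}\sum_{i=1}^n\big(\mathbb{I}(U_i\le u_j)-u_j\big)$ is a normalized sum of i.i.d. mean-zero random vectors $\xi_i=\big(\mathbb{I}(U_i\le u_1)-u_1,\dots,\mathbb{I}(U_i\le u_k)-u_k\big)$. A single $\xi_i$ has covariance $\Sigma_{lm}=u_l\wedge u_m-u_lu_m$, which is exactly $\mathrm{Cov}(W(u_l),W(u_m))$ for the bridge $W(x)=B(x)-xB(1)$; indeed, using $\mathrm{Cov}(B(s),B(t))=s\wedge t$ one checks $\mathrm{Cov}(W(s),W(t))=s\wedge t-st$ on $[0,1]$. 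The multivariate central limit theorem then gives $(W_n^U(u_1),\dots,W_n^U(u_k))\Rightarrow \mathcal{N}(0,\Sigma)$, matching the bridge.

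The main obstacle is tightness of $\{W_n^U\}$ in $D[0,1]$. Here I would use a Chentsov-type moment criterion (Billingsley): it suffices to bound, uniformly in $n$, the fourth-order increment moment
\[
\mathbb{E}\Big[\big(W_n^U(t)-W_n^U(s)\big)^2\big(W_n^U(u)-W_n^U(t)\big)^2\Big]\le C\,(u-s)^2,\qquad s\le t\le u.
\]
The key structural feature making this work is that for a single $U_i$ the indicators of the disjoint intervals $(s,t]$ and $(t,u]$ cannot both occur, so the two increments are built from negatively correlated coordinates (one computes $\mathbb{E}[\xi_i\eta_i]=-(t-s)(u-t)$ for the respective centered indicators $\xi_i,\eta_i$). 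Expanding the product over the four index slots and using independence across $i$ together with $\mathbb{E}\xi_i=\mathbb{E}\eta_i=0$, every term in which some index appears exactly once vanishes; the dominant surviving contribution, after division by $n^2$, is of order $(t-s)(u-t)\le\tfrac14(u-s)^2$, which gives the required bound with $H(x)=x$ and exponent $2\alpha=2>1$. Combined with the finite-dimensional convergence, Billingsley's theorem delivers $W_n^U\Rightarrow W$ in $D[0,1]$.

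The remaining subtlety, which I would address last, concerns the composition step when $F$ is discontinuous: the image $W\circ F$ then inherits jumps, but because $W$ is almost surely uniformly continuous on $[0,1]$ the composition is still a well-defined element of $D(\mathbb{R})$, and the continuous mapping argument goes through in the Skorokhod topology. I expect the tightness computation to be the genuinely laborious part; the finite-dimensional step and the reduction to the uniform process are essentially bookkeeping once the probability integral transform is set up.
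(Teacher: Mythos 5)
The paper gives no proof of this theorem at all---it only states it, with a remark deferring to Billingsley's weak-convergence machinery---so there is no in-paper argument to compare against; judged on its own, your proposal is correct and is precisely the classical proof from that reference: reduction to the uniform empirical process via the quantile-transform identity $F^{-1}(u)\le t \iff u\le F(t)$, the multivariate CLT for the finite-dimensional distributions with covariance $s\wedge t - st$, the two-increment Chentsov moment bound for tightness, and the continuous mapping theorem for the composition $z\mapsto z\circ F$, which is legitimate because the Brownian bridge has a.s.\ continuous paths. Your choice of the \emph{two-increment} condition over a single-increment fourth-moment bound is the genuinely essential point and you got it right: $\mathbb{E}\left[\left(W_n^U(t)-W_n^U(s)\right)^4\right]$ contains a diagonal term of order $(t-s)/n$, which is not $O\left((t-s)^2\right)$ when $t-s$ is smaller than $1/n$, whereas the product over the disjoint increments $(s,t]$ and $(t,u]$ eliminates that term (using $\mathbb{E}[\xi_i\eta_i]=-(t-s)(u-t)$, as you note) and yields the uniform bound of order $(t-s)(u-t)\le\tfrac14(u-s)^2$ needed for Billingsley's tightness criterion.
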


\textbf{Remarks}: $D(\mathbb{R})$ is the space of right-continuous functions with left limits and is equipped with the $\sup$-norm. The weak convergence of a stochastic process $X_n$ to $X$ means for every bounded continuous $f$, we have $\mathbb{E}f(X_n)\rightarrow\mathbb{E}f(X)$. Modern empirical-process theory extends this bridge logic from threshold indicators to rich function classes \citep{vandervaart1996weak,kosorok2008introduction}. 

\begin{theorem}[Donsker's Theorem for Partial Sum \citep{donsker1952justification}]
Let $X_i$'s be i.i.d. with mean $0$ and variance $1$. Define the partial sum process 
\[
Z_n(t) = \frac{S_{[nt]}}{\sqrt{n}},\ t\in[0,1],
\]
where $S_{[nt]} = \sum_{i=1}^{[nt]}X_i$ is the partial sum. Then
\[
Z_n \Rightarrow B
\]
in $D[0,1]$, where $B$ is the standard Brownian motion.
\end{theorem}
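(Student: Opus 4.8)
The plan is to invoke the standard characterization of weak convergence in a path space: by Prokhorov's theorem it suffices to establish (i) convergence of the finite-dimensional distributions (fidis) of $Z_n$ to those of $B$, and (ii) tightness of the sequence $\{Z_n\}$ in $D[0,1]$. Throughout I would work with the normalization $Z_n(t)=S_{[nt]}/\sqrt{n}$, which is the scaling under which each increment has variance converging to the length of the corresponding time interval (with the literal $1/n$ scaling the process collapses to $0$ by the law of large numbers, so $1/\sqrt n$ is the intended normalization). Since the limit $B$ has continuous paths almost surely, the two ingredients together pin down the law of any subsequential limit uniquely and force it to be Brownian motion.

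First I would handle the finite-dimensional distributions. Fix $0=t_0<t_1<\cdots<t_k\le 1$ and consider the increment vector with entries $Z_n(t_j)-Z_n(t_{j-1})=\frac{1}{\sqrt n}\sum_{i=[nt_{j-1}]+1}^{[nt_j]}X_i$. These increments are sums over disjoint blocks of indices, hence mutually independent. The $j$-th block contains $[nt_j]-[nt_{j-1}]\sim n(t_j-t_{j-1})$ summands, so by the classical CLT the $j$-th increment converges in distribution to $\mathcal{N}(0,t_j-t_{j-1})$. By independence across blocks, or formally via the Cram\'er--Wold device applied to the joint vector, the increments converge jointly to independent Gaussians with these variances, which are exactly the fidis of the increments of $B$. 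A linear-combination (continuous-mapping) argument then transfers this to $(Z_n(t_1),\dots,Z_n(t_k))\Rightarrow(B(t_1),\dots,B(t_k))$.

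The hard part will be tightness, since this controls the oscillation of the paths and cannot be read off from one-dimensional marginals. The approach I would take is to verify the modulus-of-continuity criterion: for every $\epsilon>0$,
\[
\lim_{\delta\downarrow 0}\ \limsup_{n\to\infty}\ \mathbb{P}\!\left(\sup_{|s-t|<\delta}\big|Z_n(s)-Z_n(t)\big|>\epsilon\right)=0.
\]
To establish this I would partition $[0,1]$ into $O(1/\delta)$ blocks of length $\delta$ and control the fluctuation of the partial sums inside each block. Because the $X_i$ are assumed only to have a second moment, the right tool is a maximal inequality for sums of independent variables --- Kolmogorov's maximal inequality, or more robustly Etemadi's inequality --- to bound $\mathbb{P}(\max_{j\le m}|S_j|>\lambda)$ in terms of the block variance $m$. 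Summing over blocks and sending first $n\to\infty$, then $\delta\downarrow 0$, yields the criterion. An attractive alternative that bypasses the explicit oscillation estimates is the Skorokhod embedding: represent $S_k\stackrel{d}{=}B(\tau_k)$ with $\tau_k=\sum_{i\le k}T_i$ and $\mathbb{E}T_i=\mathbb{E}X_i^2=1$, so that by the law of large numbers $\tau_{[nt]}/n\to t$ and Brownian scaling gives $S_{[nt]}/\sqrt n=B(\tau_{[nt]})/\sqrt n\approx B(t)$, with uniform continuity of Brownian paths on compacts closing the gap.

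Finally I would assemble the pieces: tightness guarantees that every subsequence of $\{Z_n\}$ has a further weakly convergent subsequence, and fidi convergence forces every such subsequential limit to share the finite-dimensional distributions of $B$; since these determine the law on $D[0,1]$ and $B$ has continuous paths, the full sequence converges weakly to $B$. One technical point worth flagging is the topology: under the uniform norm $D[0,1]$ is non-separable and raises measurability issues, so the clean statement is usually phrased in the Skorokhod $J_1$ topology, where convergence to the continuous limit $B$ coincides with uniform convergence of the approximating paths.
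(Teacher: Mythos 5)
The paper states this theorem without proof --- the only pointer is the surrounding remark citing \cite{billingsley2013convergence} --- so there is no in-paper argument to compare against; your proposal supplies the standard textbook proof. Two things you did are worth affirming explicitly. First, you correctly spotted that the paper's normalization $Z_n(t)=S_{[nt]}/n$ is an error: under that scaling $Z_n\to 0$ uniformly by the law of large numbers, and the intended process is $S_{[nt]}/\sqrt{n}$. Second, your architecture (finite-dimensional convergence via the CLT and Cram\'er--Wold over disjoint increment blocks, plus tightness via the modulus-of-continuity criterion, with Skorokhod embedding offered as an alternative) is exactly the classical route of Billingsley and, for the embedding, Durrett; the fidi part of your sketch is complete and correct.

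One step of your tightness argument would fail as literally written, and it is precisely the delicate point of the theorem. Kolmogorov's maximal inequality applied to a block of $m\approx n\delta$ summands at level $\lambda=\epsilon\sqrt{n}$ gives $\mathbb{P}(\max_{j\le m}|S_j|>\epsilon\sqrt{n})\le n\delta/(\epsilon^{2}n)=\delta/\epsilon^{2}$ per block, and summing over the $O(1/\delta)$ blocks leaves a bound of order $1/\epsilon^{2}$ that does \emph{not} vanish as $\delta\downarrow 0$. With only second moments you cannot simply sum variance bounds; the classical fix is to use Ottaviani's or Etemadi's inequality to reduce the block maximum to its endpoint, e.g.\ $\mathbb{P}(\max_{k\le m}|S_k|>\lambda)\le 2\,\mathbb{P}(|S_m|>\lambda/2)$ once $4m/\lambda^{2}\le 1/2$, and only then invoke the CLT so that the endpoint tail is controlled in the $n\to\infty$ limit by the Gaussian tail $\mathbb{P}(|\mathcal{N}(0,\delta)|>\epsilon/2)\le C(\sqrt{\delta}/\epsilon)e^{-\epsilon^{2}/(8\delta)}$, whose sum over $1/\delta$ blocks does vanish as $\delta\downarrow 0$ --- this is also why your stated order of limits ($n\to\infty$ first, then $\delta\downarrow 0$) is essential. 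You do name Etemadi's inequality, so the right tool is on your list, but the sentence ``summing over blocks \dots yields the criterion'' conceals exactly the step where the pure variance bound is insufficient. Your alternative via Skorokhod embedding ($S_k\stackrel{d}{=}B(\tau_k)$, $\tau_{[nt]}/n\to t$ by the LLN, then uniform continuity of $B$ on compacts) is sound as sketched and sidesteps the issue entirely, so the proposal as a whole contains a valid path to the theorem.
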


\begin{theorem}[Biostatistical Brownian-Bridge Diagnostic]\label{thm:biostats-bridge}
Let $X_1,\ldots,X_n$ be ordered residual summaries from a biostatistical workflow, such as calibration residuals ordered by predicted risk, martingale residual summaries ordered by event time, site-level safety residuals ordered by calendar time, longitudinal residual increments ordered by visit time, or wearable residual features ordered by monitoring time. Under the stable fitted-model null, suppose
\[
X_k=\mu+\sigma\epsilon_k,\qquad k=1,\ldots,n,
\]
where $\sigma>0$, $\epsilon_k$ are i.i.d., $\mathbb{E}\epsilon_k=0$, and $\operatorname{Var}(\epsilon_k)=1$. Let $\widehat{\sigma}_n^2$ be the sample variance and define
\[
A_n(t)=\frac{1}{\widehat{\sigma}_n\sqrt n}
\left\{\sum_{k=1}^{\lfloor nt\rfloor}X_k
-\frac{\lfloor nt\rfloor}{n}\sum_{k=1}^{n}X_k\right\},\qquad t\in[0,1].
\]
Then
\[
A_n(\cdot)\Rightarrow W^0(\cdot)=B(\cdot)-(\cdot)B(1)
\]
in $D[0,1]$. In particular,
\[
\sup_{0\le t\le1}|A_n(t)|\Rightarrow \sup_{0\le t\le1}|W^0(t)|.
\]
Under a single ordered drift,
\[
X_{n,k}=\mu+\delta\mathbb{I}\{k/n>\tau_0\}+\sigma\epsilon_k,
\qquad 0<\tau_0<1,\quad \delta\ne0,
\]
define the unstandardized bridge
\[
L_n(t)=\frac{1}{n}
\left\{\sum_{k=1}^{\lfloor nt\rfloor}X_{n,k}
-\frac{\lfloor nt\rfloor}{n}\sum_{k=1}^{n}X_{n,k}\right\}.
\]
Then
\[
\sup_{0\le t\le1}|L_n(t)-H_{\delta,\tau_0}(t)|\xrightarrow{p}0,
\]
where
\[
H_{\delta,\tau_0}(t)=
\begin{cases}
-\delta t(1-\tau_0), & 0\le t\le\tau_0,\\
-\delta \tau_0(1-t), & \tau_0<t\le1.
\end{cases}
\]
Consequently, any measurable maximizer
\[
\widehat{\tau}_n\in\operatorname*{arg\,max}_{0\le t\le1}|L_n(t)|
\]
satisfies $\widehat{\tau}_n\xrightarrow{p}\tau_0$.
\end{theorem}

\begin{proof}
For the null statement, write the centered partial-sum process as
\[
S_n(t)=\frac{1}{\sqrt n}\sum_{k=1}^{\lfloor nt\rfloor}\sigma\epsilon_k.
\]
Donsker's theorem gives $S_n(\cdot)\Rightarrow \sigma B(\cdot)$. The mapping $x(t)\mapsto x(t)-tx(1)$ is continuous at continuous limits, and replacing $t$ by $\lfloor nt\rfloor/n$ changes the process by $o_p(1)$ uniformly. Since $\widehat{\sigma}_n\to_p\sigma$, Slutsky's theorem gives $A_n\Rightarrow B(t)-tB(1)$.

For the ordered-drift statement, split $L_n$ into its deterministic mean part and its centered error part. The deterministic part converges uniformly to $H_{\delta,\tau_0}$ by direct summation, with only $O(1/n)$ error from the floor functions. The stochastic part is
\[
\frac{\sigma}{n}\left\{\sum_{k=1}^{\lfloor nt\rfloor}\epsilon_k
-\frac{\lfloor nt\rfloor}{n}\sum_{k=1}^{n}\epsilon_k\right\},
\]
whose supremum over $t\in[0,1]$ is $o_p(1)$ by the uniform law of large numbers for partial sums. Hence $L_n$ converges uniformly in probability to $H_{\delta,\tau_0}$. The absolute tent function $|H_{\delta,\tau_0}|$ has a unique maximum at $\tau_0$. For every $\eta>0$, continuity and uniqueness give a positive gap between $|H_{\delta,\tau_0}(\tau_0)|$ and $\sup_{|t-\tau_0|\ge\eta}|H_{\delta,\tau_0}(t)|$; uniform convergence transfers that gap to $|L_n|$ with probability tending to one. Thus every maximizer of $|L_n|$ lies inside $(\tau_0-\eta,\tau_0+\eta)$ with probability tending to one.
\end{proof}

\textbf{Remarks}: The i.i.d. assumption is the cleanest version for exposition. In real biomedical applications, residuals may be clustered by patient, site, family, clinician, device, batch, or calendar period; the same bridge diagnostic remains valid when the ordered residual process satisfies an appropriate functional central limit theorem and the variance estimator is consistent \citep{billingsley2013convergence,vandervaart1996weak,kosorok2008introduction}. In practice, this theorem should be read as a transparent diagnostic null, not as a claim that clinical data are literally independent.

\subsection{Finite-Sample Calibration and Practical Use}

The asymptotic Brownian bridge is a clean reference, but biomedical datasets
rarely behave like the limiting experiment. Site-level safety series can be
short, EHR streams can be autocorrelated, wearable measurements can be
bursty, and repeated observations within a patient can make naive variance
scaling too optimistic. For a submission-ready analysis, the bridge
diagnostic should therefore be paired with a calibration strategy that
matches the sampling design.

\begin{table}[H]
\centering
\small
\caption{Practical calibration choices for Brownian-bridge diagnostics in biostatistics.}
\label{tab:bridge-calibration}
\begin{tabularx}{\textwidth}{>{\RaggedRight\arraybackslash}p{0.25\textwidth}>{\RaggedRight\arraybackslash}p{0.33\textwidth}>{\RaggedRight\arraybackslash}X}
\toprule
Data setting & Recommended calibration & Interpretation \\
\midrule
Independent validation cohort &
Use the Brownian-bridge limit as a first reference; report Kolmogorov-type bands and sensitivity to variance estimation. &
Appropriate for calibration residuals or simple ordered score processes when dependence is weak. \\
Patients nested in sites or clinics &
Use cluster-robust variance or site-level resampling before forming the bridge. &
Large excursions then reflect systematic drift beyond patient-level noise rather than site composition alone. \\
Repeated measures within patient &
Order patient-level summaries, or use a block/bootstrap procedure that preserves within-subject dependence. &
Prevents dense follow-up in a few patients from dominating the bridge path. \\
Wearable or EHR time series &
Use block resampling, state-space posterior simulation, or model-based simulation under the fitted null. &
Separates physiological excursions from autocorrelation, device artifacts, and missingness patterns. \\
Group-sequential trial statistics &
Index by information fraction and calibrate against the planned alpha-spending or monitoring boundary. &
Makes boundary crossings interpretable as design-controlled first-passage events. \\
\bottomrule
\end{tabularx}
\end{table}

The diagnostic workflow is deliberately simple: choose the ordering,
define residual features, estimate the variance under the fitted null,
plot the bridge path, calibrate the maximum excursion, and then inspect the
estimated drift location. The mathematics supplies a disciplined null
shape; scientific interpretation still depends on the model, measurement
process, and clinical context.

\subsection{Clinical and Public-Health Data Vignettes}

The Brownian bridge is not only a device for asymptotic statistics; it is also a compact diagnostic language for clinical and public-health data. Once a biostatistician chooses an ordering, a centering rule, and a variance scale, cumulative deviations can ask a sharp question: is this fluctuation compatible with stable stochastic noise, or is there a clinical drift hiding in the record?

\begin{table}[H]
\centering
\small
\caption{Biostatistical data streams that pair naturally with Brownian models.}
\label{tab:clinical-vignettes}
\begin{tabularx}{\textwidth}{>{\RaggedRight\arraybackslash}p{0.24\textwidth}>{\RaggedRight\arraybackslash}p{0.31\textwidth}>{\RaggedRight\arraybackslash}X}
\toprule
Setting & Data structure & Brownian question \\
\midrule
Longitudinal disease progression &
Irregular biomarker panels, medication changes, dropout, visit windows &
If \(dX_i(t)=\mu_i(t)dt+\sigma_i dB_i(t)\), when does the latent process first cross a clinically meaningful threshold, and how much crossing-time uncertainty is just sparse follow-up? \\
Functional PCA biomarkers &
Sparse clinic measurements, dense wearable curves, smoothed laboratory trajectories &
Which eigenfunctions and FPCA scores summarize baseline elevation, rebound, deterioration, or cyclic instability, and how should those scores enter prediction or survival models? \\
Clinical trial monitoring &
Interim \(Z\)-statistics, information fractions, efficacy and safety boundaries &
If \(Z(t)=\theta t+B(t)\), which observed boundary crossing is strong evidence of treatment benefit, and which is the expected consequence of repeated looks? \\
Degradation modelling &
Organ-function decline, assay drift, device wear, biomarker damage accumulation &
If \(D_i(t)=D_i(0)+\nu_i t+\sigma_iB_i(t)\), how long until degradation crosses a clinically meaningful failure boundary? \\
SDE and Black-Merton-Scholes bridge &
Continuous-time stochastic models, positive latent states, Itô-calculus training examples &
What does a solved SDE teach us before we build biomedical diffusions with patient-specific drift and observation noise? \\
Survival and recurrent events &
Censoring, risk sets, counting processes, cumulative hazards &
After predictable variation scaling, do martingale residuals behave like Brownian noise, or do they expose time-varying effects, frailty, or model misspecification? \\
Risk prediction and calibration &
Predicted risks, binary outcomes, subgroup ordering, validation cohorts &
Does the cumulative calibration curve stay inside a Brownian-bridge band, or is the model systematically over- or under-confident in a clinically important region? \\
EHR and wearable surveillance &
Dense sensor traces, sparse labs, missingness, batch effects, alert streams &
Is an excursion a real physiological regime change, a device artifact, or ordinary diffusion-scale variability between measurements? \\
\bottomrule
\end{tabularx}
\end{table}

These vignettes are deliberately close to practice. In each case, Brownian motion supplies either a latent continuous-time model, a Gaussian limit for a cumulative residual, or a first-passage geometry for thresholds and stopping rules. The mathematics stays classical, but the target is applied: better uncertainty statements for clinical decisions.

\subsection{Cross-Domain Examples: Literary and Historical Archives}

A biostatistical tutorial can still benefit from examples outside
biomedicine, provided the role of the examples is clear. Literary and
historical archives are useful here because they turn the Brownian bridge
into a visibly ordered diagnostic: choose a sequence, extract a feature,
center and scale it, and ask whether the cumulative deviation looks like
ordinary bridge fluctuation or a structured departure. This is the same
logic a biostatistician uses for ordered residuals, but the units are
chapters, trials, newspaper issues, voyages, inscriptions, or historical
periods rather than patients and visits.
Examples include Project Gutenberg and SPGC novels, the Old Bailey
Proceedings, Chronicling America, SlaveVoyages, and larger world-literature
or world-history resources such as Seshat, CDLI, the Chinese Text Project,
OpenITI, Perseus, ELTeC, and HathiTrust
\citep{projectgutenberg,gerlach2020standardized,oldbaileyonline,chroniclingamerica,slavevoyages,seshat2026,cdli2026,ctext2026,openiti2026,perseus2026,eltec2021,hathitrust2026}.

\begin{table}[H]
\centering
\small
\caption{Literary and historical examples that illustrate Brownian-bridge thinking.}
\label{tab:literary-history-examples}
\begin{tabularx}{\textwidth}{>{\RaggedRight\arraybackslash}p{0.25\textwidth}>{\RaggedRight\arraybackslash}p{0.32\textwidth}>{\RaggedRight\arraybackslash}X}
\toprule
Archive or corpus & Ordered feature path & Brownian-bridge question \\
\midrule
Project Gutenberg or SPGC novels &
Chapters ordered by narrative sequence; features include affect terms, dialogue ratio, scientific vocabulary, sentiment, or lexical diversity. &
Does a novel's cumulative stylistic or thematic path remain compatible with ordinary chapter-to-chapter fluctuation, or does it bend around a narrative rupture? \\
Old Bailey Proceedings &
Trials ordered by hearing date; features include offence type, verdict residuals, punishment severity, gendered language, or witness structure. &
Is an apparent change in legal practice gradual, or does a bridge excursion concentrate around a historical reform, policing shift, or recording change? \\
Chronicling America newspapers &
Issues ordered by publication date; features include topic attention, named entities, public-health vocabulary, war terms, or advertisement density. &
Does cumulative attention to an event exceed ordinary issue-level variation, and where does the attention path begin to depart from baseline? \\
SlaveVoyages and mortality ledgers &
Voyages or records ordered by departure year, route, port, or vessel class; features include mortality residuals, crowding, duration, or destination. &
Can a bridge path identify periods where risk departs from a fitted historical baseline while keeping uncertainty and missingness visible? \\
World-literature and world-history corpora &
Texts, artifacts, polity-period observations, or volumes ordered by chronology, genre, dynasty, author date, or publication year. &
Do cumulative residuals suggest a smooth diffusion of form and vocabulary, or a concentrated shift that merits philological or historical interpretation? \\
\bottomrule
\end{tabularx}
\end{table}

These examples are not intended to make cultural data the object of the
paper. Their purpose is pedagogical. They show that the same Brownian-bridge
geometry used for calibration residuals or survival diagnostics can also
teach analysts to think carefully about ordering, centering, variance
scaling, dependence, and the difference between random fluctuation and a
scientifically meaningful departure.

\subsection{Illustrative Experiment: A Literary Bridge Diagnostic}

To make the cross-domain analogy concrete, I include one small reproducible
experiment. The point is not to turn the manuscript into a digital-humanities
paper. The point is to show, in a setting with an intuitive ordered unit, how
the same bridge statistic used for biomedical residuals behaves after
centering, scaling, and finite-sample calibration.

The empirical unit is the 24 numbered chapters of Mary Shelley's
\emph{Frankenstein}, after excluding the four frame letters and removing the
Project Gutenberg wrapper \citep{gutenbergFrankenstein84}. For chapter \(k\),
let \(h_k\) be hits from a 65-term gothic-affect lexicon and \(c_k\) be hits
from a 16-term creation/science lexicon. With \(w_k\) chapter words, define
\[
    x_k = 1000(h_k-c_k)/w_k,\qquad
    z_k = \frac{x_k-\bar{x}}{s_x},\qquad
    C_j = \frac{1}{\sqrt{24}}\sum_{k=1}^{j}z_k.
\]
Because the \(z_k\)'s are centered, \(C_{24}=0\), so this is a discrete
Brownian-bridge-style path. To avoid relying only on the asymptotic
Kolmogorov reference, I also simulated 20,000 stable length-24 feature
sequences and a single-shift alternative with a 1.25 standard-deviation mean
increase after chapter 6.

\begin{table}[H]
\centering
\small
\caption{Experimental results for the literary Brownian-bridge diagnostic.}
\label{tab:literary-experiment-results}
\begin{tabularx}{\textwidth}{>{\RaggedRight\arraybackslash}p{0.38\textwidth}>{\RaggedRight\arraybackslash}X}
\toprule
Quantity & Result \\
\midrule
Simulation design & \(n=24\), 20,000 Monte Carlo replicates, seed 20260621 \\
Stable-null 95\% cutoff for \(\max_j |A_j|\) & 1.19 \\
Stable-null 99\% cutoff for \(\max_j |A_j|\) & 1.42 \\
Single-shift alternative & Mean increases by 1.25 standard deviations after chapter 6 \\
Power against this shift using the simulated 95\% cutoff & 0.47 \\
Text and units & Project Gutenberg ebook 84; 24 numbered chapters; 69,700 tokens \\
Lexicon totals & 1,175 gothic-affect hits from 65 terms; 310 creation/science hits from 16 terms \\
Early versus later contrast & Chapters 1--6 average \(-2.10\); chapters 7--24 average \(14.91\) gothic-minus-science terms per 1,000 words \\
Observed bridge statistic & \(\max_j |C_j|=1.47\), attained at chapter 6 \\
Finite-sample tail probability & Simulated stable-null tail probability at 1.47 is 0.006 \\
\bottomrule
\end{tabularx}
\end{table}

\begin{figure}[H]
    \centering
    \includegraphics[width=\textwidth]{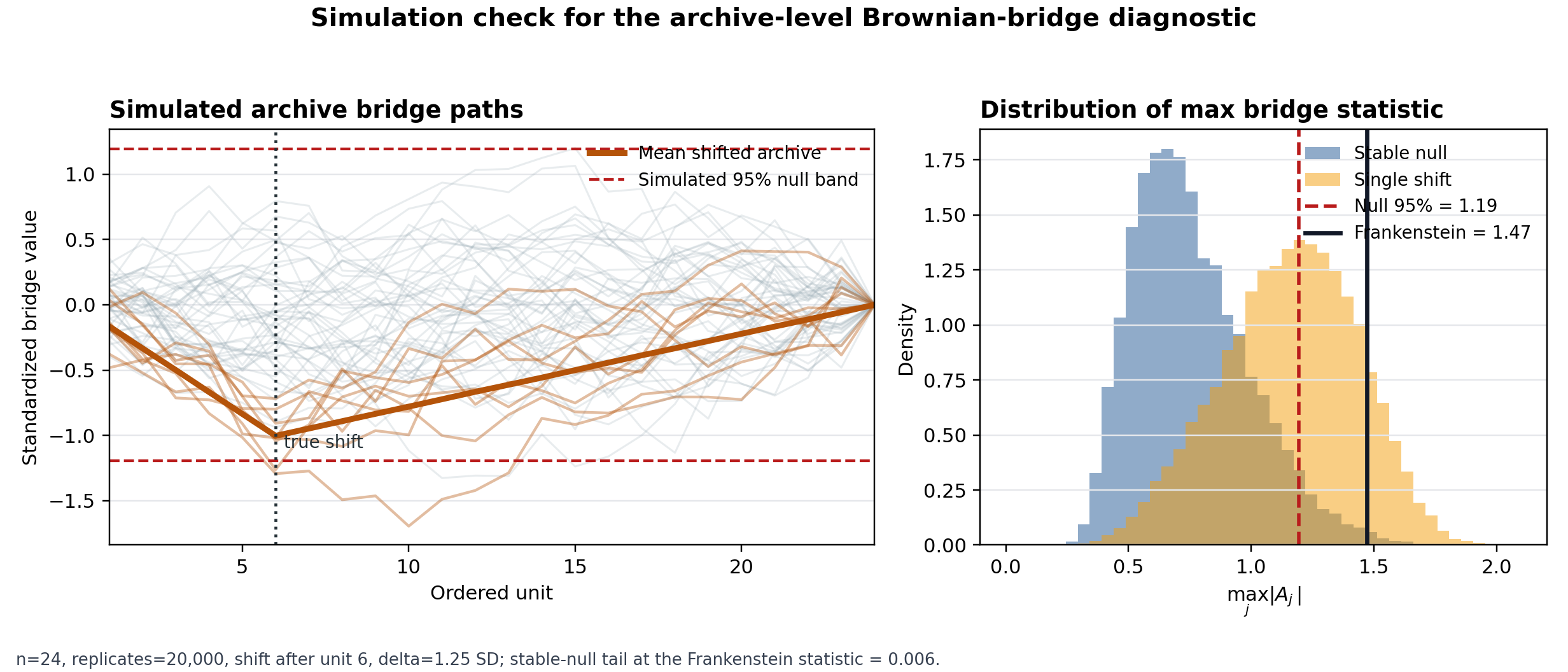}
    \caption{Monte Carlo calibration for the length-24 Brownian-bridge statistic. The stable null provides finite-sample cutoffs, while the single-shift alternative shows the tent-shaped bridge pattern induced by an ordered change.}
    \label{fig:archive-bridge-simulation}
\end{figure}

\begin{figure}[H]
    \centering
    \includegraphics[width=\textwidth]{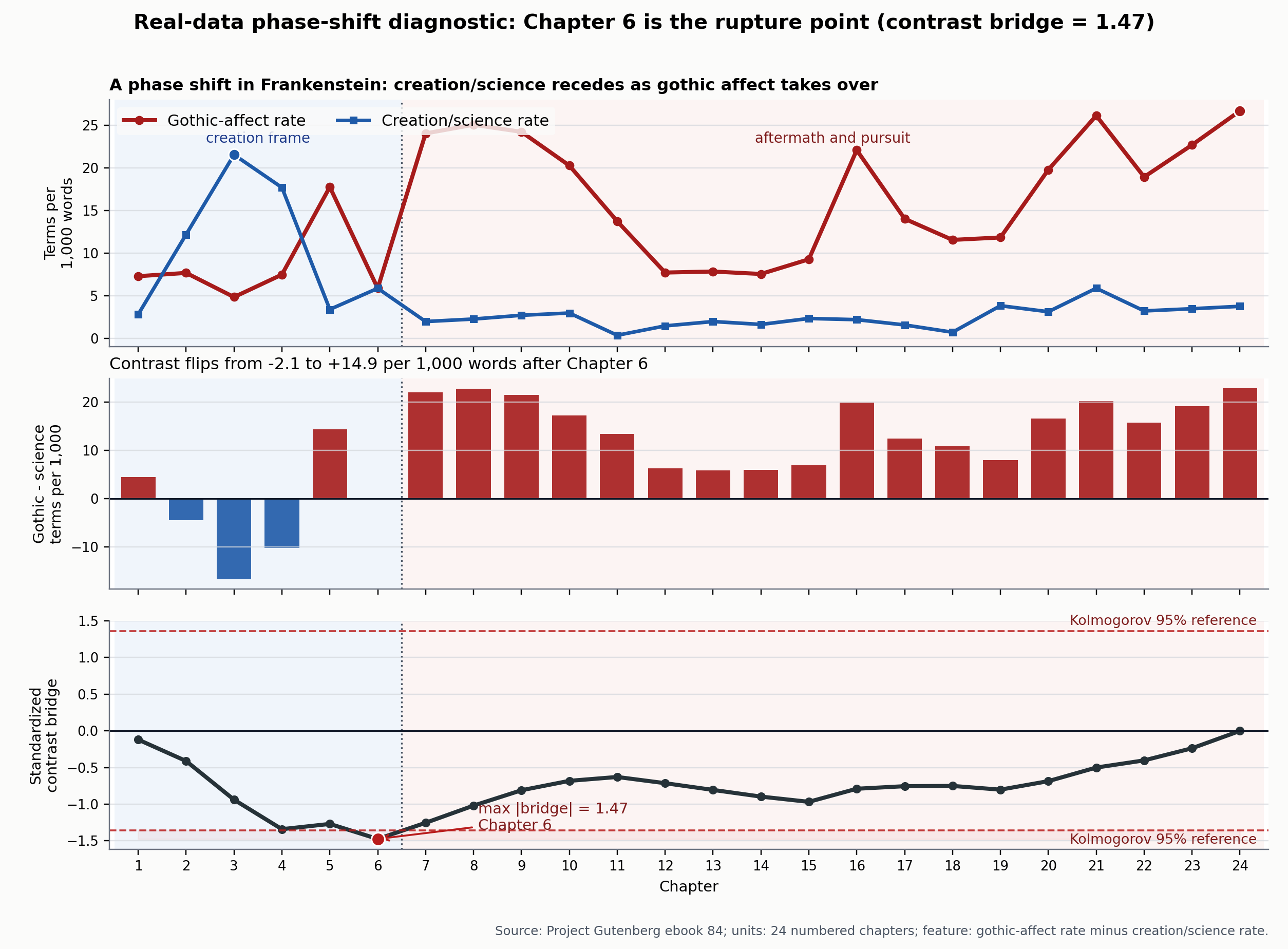}
    \caption{Chapter-level creation-to-catastrophe bridge diagnostic for \emph{Frankenstein}. The top panel compares gothic-affect and creation/science rates, the middle panel shows their signed contrast, and the bottom panel shows the standardized cumulative contrast bridge.}
    \label{fig:frankenstein-gothic-bridge}
\end{figure}

The generic gothic-affect rate is descriptive, but the contrast is more
diagnostic. Creation/science vocabulary dominates the first six numbered
chapters, especially chapter 3; after chapter 6, gothic-affect vocabulary
dominates almost every chapter and peaks in the final pursuit. The bridge
statistic reaches 1.47 at chapter 6, above both the asymptotic 95\% reference
band of 1.36 and the simulated length-24 99\% cutoff of 1.42. In biostatistical
terms, this is an ordered residual experiment with a visible drift location:
the example teaches the bridge geometry without changing the paper's
biomedical target.

\subsection{A Biostatistician's Example Bank}

Examples often illuminate abstract concepts, and Brownian motion is no exception. The examples below keep the same probability backbone but translate it into biostatistical work: Functional PCA, thresholds, degradation, SDEs, survival, trials, calibration, surveillance, and noisy observation streams.

\begin{example}[Functional PCA for sparse biomarker curves]
Suppose \(Y_{ij}\) is a noisy measurement of subject \(i\)'s latent biomarker trajectory at time \(t_{ij}\). A functional PCA model writes
\[
Y_{ij}=\mu(t_{ij})+\sum_{k=1}^{K}\xi_{ik}\phi_k(t_{ij})+\epsilon_{ij},
\qquad
\operatorname{Var}(\xi_{ik})=\lambda_k.
\]
The Karhunen-Loève expansion supplies the population target: the
\(\phi_k\)'s are eigenfunctions of the covariance operator of the latent
process. In a Brownian working model, early eigenfunctions describe broad
trajectory shapes; in a Brownian-bridge working model, the endpoint
constraint is useful for centered residual curves. The scores \(\xi_{ik}\)
can then become covariates in risk prediction, clustering, treatment-effect
heterogeneity analysis, or joint longitudinal-survival models
\citep{ramsay2005functional,yao2005functional,rizopoulos2012joint}.
\end{example}

\begin{example}[Time to a clinical biomarker threshold]
Suppose a latent biomarker follows
\[
X(t)=x_0+\mu t+\sigma B(t),\qquad a=c-x_0>0,
\]
and define the first time it reaches a clinical threshold \(c\) by
\[
\tau_c=\inf\{t>0:X(t)\ge c\}.
\]
For \(\mu=0\), the reflection principle gives
\[
P(\tau_c\le T)=2\left\{1-\Phi\left(\frac{a}{\sigma\sqrt{T}}\right)\right\}.
\]
With nonzero drift, the inverse-Gaussian density is
\[
f_{\tau_c}(t)=\frac{a}{\sigma\sqrt{2\pi t^3}}
\exp\left(-\frac{(a-\mu t)^2}{2\sigma^2t}\right).
\]
For a biostatistician, this is a model for time to renal decline, viral rebound, tumor progression, or loss of immune response when the true biology is only observed at visits.
\end{example}

\begin{example}[Degradation modelling for disease progression]
Let \(D_i(t)\) be accumulated biological damage, device wear, or assay degradation for subject or unit \(i\). A Brownian degradation model is
\[
D_i(t)=D_i(0)+\nu_i t+\sigma_iB_i(t),
\qquad
\tau_i=\inf\{t:D_i(t)\ge c\}.
\]
The random time \(\tau_i\) is a first-passage endpoint: renal function reaches a dangerous decline, a biomarker exceeds a toxicity threshold, or a sensor drifts beyond acceptable calibration. The model is simple enough to yield interpretable hitting-time calculations but flexible enough to connect longitudinal degradation to survival-type endpoints \citep{kahle2016degradation}.
\end{example}

\begin{example}[Black-Merton-Scholes as the solved SDE everyone should see]
Under the Black-Merton-Scholes model,
\[
dS_t=\mu S_tdt+\sigma S_tdB_t.
\]
Itô's formula gives
\[
S_t=S_0\exp\left\{\left(\mu-\frac{\sigma^2}{2}\right)t+\sigma B_t\right\}.
\]
Under risk-neutral pricing, \(\mu\) is replaced by the risk-free rate \(r\), and a European call with strike \(K\) and maturity \(T\) has
\[
C=S_0\Phi(d_1)-Ke^{-rT}\Phi(d_2),
\]
where
\[
d_1=\frac{\log(S_0/K)+(r+\sigma^2/2)T}{\sigma\sqrt T},
\qquad
d_2=d_1-\sigma\sqrt T.
\]
This finance example earns its place because it teaches the mechanics of SDE modeling: multiplicative Brownian noise, Itô correction, transformation, and a closed-form consequence \citep{black1973pricing,merton1973theory}. A biostatistical SDE usually changes the state variable and observation model, but the calculus is the same.
\end{example}

\begin{example}[Group-sequential monitoring as first passage]
Let \(Z(t)\) be a standardized interim statistic indexed by information fraction \(t\in[0,1]\). A common Brownian approximation is
\[
Z(t)=\theta t+B(t),
\]
where \(\theta=0\) under the null and \(\theta\neq0\) under a local alternative. A stopping rule of the form
\[
\tau_b=\inf\{t:Z(t)\ge b(t)\}
\]
is a first-passage problem. Under the null, Brownian boundary approximations explain why repeated looks inflate type I error unless the boundary is designed carefully \citep{lan1983discrete,jennison2000group}.
\end{example}

\begin{example}[Calibration bands for risk prediction]
Let \(R_i\) be a predicted risk and \(Y_i\in\{0,1\}\) an outcome. After sorting patients by predicted risk, a cumulative calibration process can be written schematically as
\[
C_n(u)=\frac{1}{\sqrt n}\sum_{i:R_i\le u}\{Y_i-\widehat p_i\}.
\]
If the model is calibrated and regularity conditions hold, \(C_n(u)\), after variance stabilization and endpoint correction, behaves like a Brownian bridge. Large bridge excursions identify risk regions where the model is clinically misleading. This is the same geometry as the Kolmogorov-Smirnov statistic, but the object is a prediction model a hospital might actually deploy \citep{vandervaart1996weak,kosorok2008introduction}.
\end{example}

\begin{example}[Joint longitudinal-survival modeling]
In a joint model, the latent biomarker trajectory may be written as
\[
dY_i^\ast(t)=\{m_i(t)-\kappa_iY_i^\ast(t)\}dt+\sigma_i dB_i(t),
\qquad
Y_{ij}=Y_i^\ast(t_{ij})+\epsilon_{ij},
\]
while the event hazard reads that latent state:
\[
\lambda_i(t)=\lambda_0(t)\exp\{Z_i^\top\gamma+\alpha Y_i^\ast(t)\}.
\]
The Brownian term is the part of the biology that keeps moving between visits. It lets the survival model borrow information from the longitudinal process without pretending that the last measured lab value is the current value \citep{henderson2000joint,rizopoulos2012joint}.
\end{example}

\begin{example}[Dynamic frailty for survival and recurrent events]
A fixed frailty term says that unmeasured vulnerability is born once and never changes. For many clinical processes, that is too stiff. A dynamic frailty model can use
\[
dU_i(t)=-\rho U_i(t)dt+\eta dB_i(t),
\qquad
\lambda_i(t)=\lambda_0(t)\exp\{X_i(t)^\top\beta+U_i(t)\}.
\]
The Ornstein-Uhlenbeck form keeps frailty from wandering forever while still allowing inflammation, adherence, exposure, or care access to drift over time. Counting-process martingales then provide residual checks for whether the model has absorbed the right amount of temporal heterogeneity \citep{andersen1982cox,andersen1993statistical,fleming1991counting,aalen2008survival}.
\end{example}

\begin{example}[Local time as burden near a clinical cutoff]
Let \(X(t)\) be a centered continuous marker and let \(c\) be a clinically important cutoff. The local time
\[
L_t(c)=\lim_{\epsilon\downarrow0}
\frac{1}{2\epsilon}\int_0^t\mathbb{I}\{|X(s)-c|<\epsilon\}ds
\]
measures how intensely the trajectory occupies the threshold neighborhood. Two patients may have the same number of above-threshold readings, but one repeatedly grazes the cutoff while another makes one decisive excursion. Local time separates near-threshold burden from simple crossing counts \citep{karatzas2014brownian,morters2010brownian}.
\end{example}

\begin{example}[Irregular EHR labs and wearable streams]
For irregular observations \(Y_{ij}\) at times \(t_{ij}\), a linear Gaussian diffusion model gives
\[
X_i(t_{ij})\mid X_i(t_{i,j-1})
\sim
\mathcal{N}\{X_i(t_{i,j-1})+\mu_i\Delta_{ij},\sigma_i^2\Delta_{ij}\},
\qquad
Y_{ij}=X_i(t_{ij})+\epsilon_{ij}.
\]
The observation schedule changes the transition variance through \(\Delta_{ij}=t_{ij}-t_{i,j-1}\). This is why Brownian motion is a natural engine for EHR and wearable smoothing: uncertainty grows during long gaps and shrinks when data arrive densely \citep{kalman1960new,durbin2012time}.
\end{example}

\begin{example}[Cumulative score processes for model diagnostics]
Let \(U_i(\widehat\beta)\) be a fitted-model score residual ordered by event time, predicted risk, dose, or calendar time. A cumulative score process has the form
\[
S_n(t)=\frac{1}{\sqrt n}\sum_{i\le nt}U_i(\widehat\beta).
\]
Under a stable model, \(S_n(t)\) often converges to a Brownian motion or Brownian bridge after projection for estimated parameters. Persistent drift is a diagnostic: proportional hazards may be failing, an exposure effect may vary over time, or a subgroup may be systematically misfit \citep{andersen1993statistical,vandervaart1996weak,kosorok2008introduction}.
\end{example}

\begin{example}[Safety surveillance as a Brownian alarm]
Suppose \(O(t)\) is an observed cumulative adverse-event count and \(E(t)\) is its predictable expected count under standard care. A variance-scaled process
\[
M(t)=\frac{O(t)-E(t)}{\sqrt{\widehat{\operatorname{Var}}\{O(t)-E(t)\}}}
\]
is often approximated by a martingale with Brownian limits under no safety signal. A boundary crossing can then be interpreted as an alarm with controlled false-signal behavior. The same first-passage intuition from the reflection principle becomes a monitoring tool for pharmacovigilance, hospital quality, or vaccine safety \citep{fleming1991counting,jennison2000group}.
\end{example}

\section{Conclusion}

This manuscript has followed Brownian motion from its construction as a
probability measure on continuous paths to its use as a modelling and
diagnostic language in biostatistics. The same process supports several
apparently different tasks: latent biomarker diffusion, Functional PCA for
trajectory compression, degradation modelling through first-passage times,
dynamic frailty, group-sequential monitoring, calibration assessment,
survival and recurrent-event residuals, and state-space smoothing for EHR
and wearable streams.

The main lesson is not that biomedical systems are literally Brownian. The
more useful point is that Brownian motion supplies a disciplined stochastic
baseline. Its covariance kernel explains why the Karhunen-Loève expansion
and FPCA are natural tools for functional biomarkers. Its reflection
principle and hitting-time calculations clarify threshold endpoints. Its
local time separates near-cutoff burden from simple crossing counts. Its
martingale characterization links counting-process residuals to survival
diagnostics. Its SDE form, illustrated by the Black-Merton-Scholes model,
shows how continuous-time noise enters applied likelihoods and prediction
models.

For the working biostatistician, the Brownian bridge is especially useful
because it turns ordered residuals into a visual and inferential check on
model stability. A calibrated bridge path can reveal drift by risk score,
event time, clinic site, calendar period, sensor batch, or visit sequence.
The tool is modest, but that modesty is its strength: it asks the analyst
to declare the ordering, centering rule, variance scale, and dependence
structure before turning an apparent pattern into a scientific claim.

\section*{Data Availability}

No patient-level or protected health data are analyzed in this tutorial
manuscript. The figures and tables are methodological illustrations
generated from analytic formulas or simulation scripts included in the
manuscript workspace. The illustrative archival experiment uses Project
Gutenberg ebook 84 for \emph{Frankenstein}; the chapter-level feature table,
Monte Carlo summary, plotting scripts, and generated figures are included in
the workspace. Any empirical extension should provide the exact feature
definitions, preprocessing scripts, model code, simulation settings, and
diagnostic procedures in a public repository when privacy, copyright, and
data-use agreements permit.

\section*{AI Disclosure Statement}

This manuscript was revised with the assistance of an AI coding and writing assistant for language editing, structural revision, and drafting of mathematical exposition. The author is responsible for checking all mathematical claims, references, and interpretations before submission.

\bibliographystyle{apalike}
\bibliography{references}

\end{document}